\newif\ifFull 
\newtheorem{lemma}{Lemma}
\newtheorem{theorem}{Theorem}
\newtheorem{corollary}{Corollary}
\newtheorem{claim}{Claim}
\newcommand{\Z}{\ensuremath{\mathds{Z}}}
\newcommand{\ssg}{\ensuremath{\mathrm{sep}}}
\newcommand{\snip}{\mathbin{\text{\raisebox{0.15ex}{\rotatebox[origin=c]{60}{\Rightscissors}\!}}}}
\newcommand{\IDDG}{\ensuremath{\mathrm{IDDG}}}
\newcommand{\DDG}{\ensuremath{\mathrm{DDG}}}
\renewcommand{\paragraph}[1]{\vspace{1mm} \noindent {\bf #1} \ }
\title{All-Pairs Minimum Cuts in Near-Linear Time\\ for Surface-Embedded Graphs}
\author{Glencora Borradaile\footnote{School of Electrical Engineering and Computer Science, Oregon State University.} \and David Eppstein \footnote{Computer Science Department, Donald Bren School of Information and Computer Sciences, University of California, Irvine.} \and Amir Nayyeri\footnotemark[1] \and Christian Wulff-Nilsen\footnote{Department of Computer Science, University of Copenhagen (DIKU).}}
\date{ }
\begin{document}
\maketitle
\thispagestyle{empty}

\begin{abstract}
For an undirected $n$-vertex graph $G$ with non-negative edge-weights, we
consider the following type of query: given two vertices $s$ and $t$ in $G$,
what is the weight of a minimum $st$-cut in $G$? We solve this problem in preprocessing time $O(n\log^3 n)$ for graphs of bounded genus, giving
the first sub-quadratic time algorithm for this class of graphs. Our result also improves by a logarithmic factor a previous algorithm by Borradaile, Sankowski and
Wulff-Nilsen (FOCS 2010) that applied only to planar graphs. Our algorithm constructs a Gomory--Hu tree for the given graph, providing a data structure with space $O(n)$ that can answer minimum-cut queries in constant time.  The
dependence on the genus of the input graph in our preprocessing time is~$2^{O(g^2)}$.
\end{abstract}
\newpage
\setcounter{page}{1}

\section{Introduction}

In the {\em all-pairs minimum cut problem} we seek the minimum $st$-cut for every pair $\{s,t\}$ of vertices in an edge-weighted, undirected graph $G$. Gomory and Hu~\cite{GH61} showed that these minimum cuts can be represented by a single edge-weighted tree such that:
\begin{itemize} [noitemsep,nolistsep]
\item the nodes of the tree correspond one-to-one with the vertices of $G$,
\item for any distinct vertices $s$ and $t$, the minimum-weight edge on the unique $s$-to-$t$ path in the tree has weight equal to the min $st$-cut in $G$, and
\item removing this minimum-weight edge from the tree creates a partition of the nodes into two sets corresponding to a min $st$-cut in $G$.
\end{itemize}
We call such a tree a minimum cut tree; it is also known as a Gomory--Hu tree or cut-equivalent tree. Gomory and Hu showed how to find this tree with $n-1$ calls to a minimum cut algorithm by building up a collection of nested cuts, in each step adding a minimum $st$-cut that separates a previously-unseparated pair of vertices.

\paragraph{New results} We provide the first subquadratic algorithm for all-pairs minimum cuts in bounded-genus graphs. We can find the Gomory--Hu tree of a graph of genus $g$ in time $2^{O(g^2)}n\log^3 n$, giving a data structure of size $O(n)$ which can answer minimum-cut queries in constant time.
The  best previous method for this class of graphs uses the standard Gomory--Hu algorithm and has a running time of $O(g^8n^2 \log^2n \log^2C)$ (for integer edge weights summing to $C$) using the best maximum-flow algorithm to find minimum cuts~\cite{CEN12} or $2^{O(g)}n^2 \log n$ using the best minimum-cut algorithm~\cite{EN11} for graphs of bounded genus.
Our result hinges in part on an improvement in the time for constructing Gomory--Hu trees in planar graphs. Borradaile, Sankowski and Wulff-Nilsen showed how to solve this problem in $O(n \log^4 n)$ time~\cite{BSW14}.  In this paper, we improve this running time to $O(n \log^3 n)$ time (\autoref{sec:planar-speed-up}).

\paragraph{From planar to bounded genus} We reduce the problem of computing the minimum-cut tree in a graph of genus $g$ to the same problem in a set of $2^{O(g^2)}$ planar graphs.  
The minimum cut, viewed as an even subgraph of the dual graph, is a collection of cycles belonging to one of $2^{2g}$ homology classes.
Our main observation is that one can reduce the problem of finding a minimum cut that is composed of dual cycles in certain homology classes to a planar problem {\em before} taking into account the vertices that should be separated.  This allows us to find a cut tree whose cuts are composed of dual cycles in certain homology classes.  We describe this reduction in \autoref{sec:g}.  Through this reduction, we compute minimum cut trees in $2^{O(g^2)}$ different planar graphs such that the minimum $st$-cut in the original graph is represented in at least one of these $2^{O(g^2)}$ cut trees. Although this would already solve the minimum cut query problem, we also show
\ifFull
\autoref{sec:merge}
\else
in the full version of this paper
\fi
how to produce a single minimum cut tree for the original graph, by merging these cut trees in a way that preserves minimum cuts in time $O(k n\log^2 n)$ where $k$ is the number of cut trees to merge.  This tree-merging algorithm does not rely on the surface embedding.  Note that one could use the Gomory-Hu algorithm of Gusfield to merge these trees, but Gusfield's algorithm has an $O(n^2)$ overhead independent of the method for finding minimum cuts~\cite{Gusfield90}.

\paragraph{Planar speed-up} The algorithm of Borradaile et~al.~\cite{BSW14}  for planar graphs implements the Gomory--Hu algorithm by using Miller's recursive cycle separator decomposition~\cite{Miller86} to guide the selection of pairs of vertices to separate.  Starting with the leaf-most pieces of the decomposition, the algorithm separates all pairs of vertices in a piece that are not yet separated.  We refer to this algorithm as the {\em cycle-based} algorithm as it works in the dual graph finding minimum separating cycles of pairs of faces.  

We improve the running time of this algorithm by addressing two bottlenecks.  
The first bottleneck is in finding a separating cycle. The cycle-based algorithm incurs an $O(\log^3 n)$ factor per cycle; multiplied by the depth of the recursive decomposition of the planar graph this results in an $O(\log^4 n)$ factor in the overall runtime.  Instead, we find a minimum cut by way of first computing a maximum flow, using the recursive flow techniques~\cite{BKMNW11,LNSW12}, which, surprisingly (because max flow computations usually dominate minimum cut computations), reduces the overhead per cycle to $O(\log^2 n)$.  We refer to this algorithm as the {\em flow-based} algorithm (\autoref{sec:st}).
The second bottleneck is in adding a cut to the collection; we improve the overhead from an amortized $O(\log^4 n)$ per cut to an amortized $O(\log^3 n)$ per cut (\autoref{sec:rt}).

\paragraph{Minimum cycle bases}   By duality~\cite{HM94}, our algorithm also finds the cycle weights in a minimum cycle basis of a planar graph in the same time bound; in fact, the planar all pairs min cut algorithm is really a minimum cycle basis algorithm in the dual graph.  However, in graphs of higher genus and even in toroidal graphs, the minimum cycle basis appears to be mostly unrelated to the dual of minimum cuts.  Chambers et al.~\cite{BCFN15} study computing the minimum cycle bases and the minimum homology basis on surface embedded graphs.

\section{Preliminaries}

We consider a graph $G$ with $n$ vertices with a cellular embedding on an orientable surface of genus $g$.\footnote{An embedding is cellular if every face is a topological disk. If $G$ has any embedding on a surface of genus~$g$, then the rotation system of the embedding gives a cellular embedding of $G$ on a surface of genus at most~$g$.}  In this paper, we use cycle in its topological sense, that is a closed curve;  graph theoretically, a cycle in this sense corresponds to a closed walk.  We specify simple cycle when we refer to a closed curve that visits each point (vertex, edge) at most once.

\paragraph{Duality} For every connected, surface embedded graph $G$ (the {\em primal}) there is another connected graph embedded on the same surface, the {\em dual} $G^*$. The faces of $G$ are the vertices of $G^*$ and vice versa. The edges of $G$ correspond one-for-one with the edges of $G^*$: for each edge $e$ in $G$, there is an edge $e^*$ in $G^*$ whose endpoints correspond to the faces of $G$ incident to $e$.  Dual edges inherit the weight of the corresponding primal edges; namely, $w(e^*) = w(e)$.

\paragraph{Surgery} For a cycle $C$ we define the operation of \emph{cutting} along $C$ in $G$ and denote it $G \snip C$.
$G \snip C$ is the graph obtained by cutting along $C$ in the drawing of $G$ on the surface, creating two copies of every edge in $C$.  The edges in the copies of $C$ inherit the weights of the original edges.  We view $G \snip C$ as being embedded on a surface with two punctures, corresponding to the two resulting copies of $C$.  

\paragraph{$\mathds{Z}_2$-homology} $\mathds{Z}_2$ homology as we use it in this paper is described by Erickson and Nayyeri~\cite{EN11};  we refer the reader to their paper for formal definitions of the following.  For further background on surface topology and homology we refer the reader to Hatcher~\cite{H02}. 
Here, when we talk about homology we mean homology with $\mathds{Z}_2$ coefficients.

A subgraph is called \emph{even} if it has even degree at every vertex, or equivalently if it is the edge-disjoint union of simple cycles.  An even subgraph is \emph{null-homologous} if it is the boundary of a union of faces in~$G$.  Two even subgraphs are homologous if their symmetric difference is null-homologous.  A \emph{homology basis} is a set $\{C_1, C_2, \ldots, C_{2g}\}$ of simple cycles in $G^*$ that generates the homology class of all cycles of $G^*$; it can be constructed in linear time~\cite{EN11}.  The \emph{signature} of an edge $[e]$ is defined as a $2g$-bit vector, whose $i$th bit is $1$ if and only if $e^* \in C_i$.  Given two faces $a$ and $b$ and an $a^*$-to-$b^*$ path $P$ in $G^*$, $[e]^{ab}$ is the extended $(2g+1)$-bit vector whose first bit is $1$ if and only if $e^*\in P$.  Note that $[e]^{ab}$ depends on the choice of $P$.  For any subgraph $X\in G$ we define $[X] = \bigoplus_{e\in X}{[e]}$ and $[X]^{ab} = \bigoplus_{e\in X}{[e]^{ab}}$.  Two even subgraphs $X$ and $X'$ are homologous if and only if $[X\oplus X'] = [X]\oplus [X'] = 0$.  We use $\oplus$ to denote the symmetric difference of sets and the exclusive or of binary numbers. 

\newcommand{\homsigfigure}{%
\begin{figure}[h]
  \centering
    \includegraphics[height=1.15in]{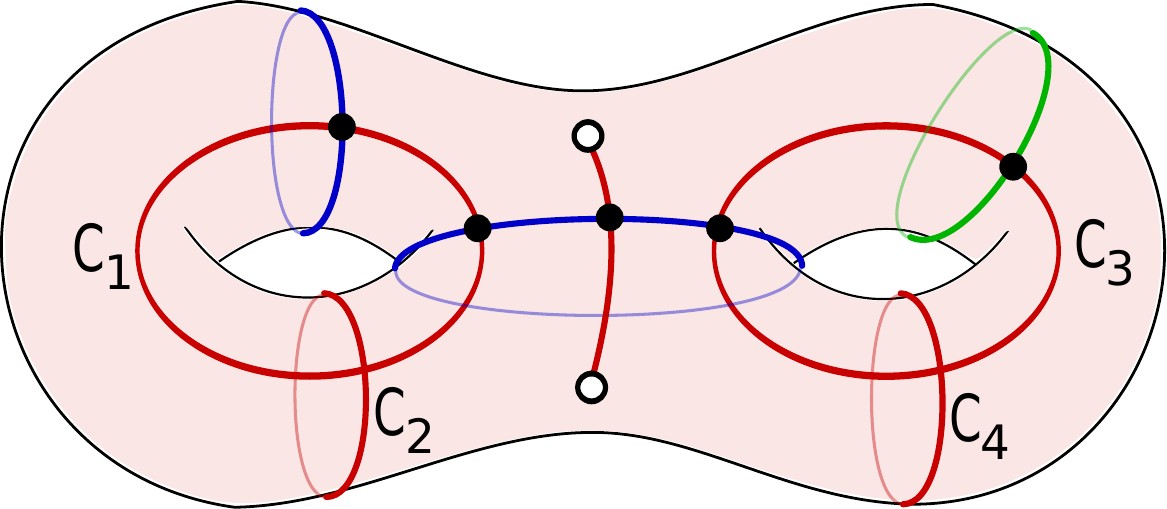}
  \caption{The $\mathds{Z}_2$ homology signature of an even subgraph.  The red cycles (in the dual) form a homology basis.  The red path (also in the dual) connects two vertices $a^*$ and $b^*$.  The disconnected blue even subgraph $X_B$ (in the primal) has signature $[X_B]^{ab} = (10010)$  and the green even subgraph $X_G$ (also in the primal) has signature $[X_G]^{ab} = (00010)$.}
  \label{fig:sig}
\end{figure}}

\ifFull
\homsigfigure
\fi

\paragraph{Minimum cuts and minimum separating subgraphs} The dual of a minimum $st$-cut is the minimum even subgraph $X$ that is null-homologous and such that $[X]_0^{s^*t^*} = 1$, i.e.\ $[X]^{s^*t^*} = [1\, 0 \cdots 0]$ (Lemma 3.1~\cite{CEN09}). We will call this the minimum $s^*t^*$-separating subgraph. In a surface of genus $g$, a minimum separating subgraph is composed of at most $g+1$ simple cycles.  In particular, a minimum separating subgraph in a planar graph is a simple cycle~\cite{Whitney32}. The all-pairs minimum cut problem is equivalent to the all-pairs minimum separating subgraph problem in the dual.

\paragraph{Faces and boundaries}  For a set of faces $F$, we define $\partial F$ to be the boundary of $F$, that is, the set of edges that bound faces of $F$ and $\bar F$.  Finally, for any graph $G$, we use $F(G)$ to denote the set of faces of $G$.
We additionally consider a special type of faces, {\em boundary} faces, which correspond to the boundary of punctures in the surface and are introduced over the course of our algorithm from the cutting along operation.  If $H = G \snip X$ for some set of edges $X$, then $F(H)$ is $F(G)$ plus a set of additional boundary faces, bounded by edges of $X$.

\begin{claim}\label{clm:sep}
    Let $X$ be a set of edges such that $G \snip X$ has one more boundary face than $G$.  Let $S$ be an $ab$-separating subgraph in $G \snip X$ for non-boundary faces $a,b \in F(G \snip X)$.  Then $S$ is $ab$-separating in $G$.
\end{claim}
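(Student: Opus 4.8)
The plan is to use the characterization of $ab$-separating subgraphs as boundaries of face sets, and to track what happens to this face set under the $\snip$ operation. Recall that a subgraph $T$ is $ab$-separating in a surface-embedded graph iff $T$ is even, $T=\partial F'$ for some set of faces $F'$, and $T$ separates $a$ from $b$; moreover, when $T=\partial F'$, this last condition is equivalent to exactly one of $a,b$ lying in $F'$, because an $a^*$-to-$b^*$ dual path toggles $F'$-membership each time it crosses $T$, so its crossing parity with $T$ equals $[a\in F']\oplus[b\in F']$. In particular, once $S$ is exhibited as the boundary of a suitable face set of $G$, its evenness and null-homology in $G$ are automatic.

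First I would apply this to $S$ inside $G\snip X$: fix $F'\subseteq F(G\snip X)$ with $\partial F'=S$ and, say, $a\in F'$ and $b\notin F'$. By hypothesis $F(G\snip X)=F(G)\cup\{z\}$ for a single new boundary face $z$, and $z\notin\{a,b\}$ since $a$ and $b$ are non-boundary. Set $F'':=F'\setminus\{z\}\subseteq F(G)$; since $z\notin\{a,b\}$, we still have $a\in F''$ and $b\notin F''$. It then remains only to show $\partial_G F''=S$, where $S$ is read in $G$ by identifying the two copies of each edge of $X$ (so that both copies of a single edge cancel modulo $2$).

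I would verify $\partial_G F''=S$ edge by edge. An edge $e\notin X$ has the same two incident faces in $G$ and in $G\snip X$, neither of them $z$, so $e\in\partial_G F''\iff e\in\partial_{G\snip X}F'\iff e\in S$. For $e\in X$ with incident faces $f_1,f_2$ in $G$, cutting replaces $e$ by copies $e_1,e_2$, where $e_i$ is incident to $f_i$ and to the single new boundary face $z$; hence $e$ lies in the image of $S$ in $G$ iff exactly one of $e_1,e_2$ lies in $S=\partial_{G\snip X}F'$. If $z\in F'$ then $e_i\in S\iff f_i\notin F''$, and if $z\notin F'$ then $e_i\in S\iff f_i\in F''$; in either case exactly one of $e_1,e_2$ is in $S$ iff exactly one of $f_1,f_2$ is in $F''$, that is, iff $e\in\partial_G F''$ (and this also handles the degenerate case $f_1=f_2$). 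Therefore $\partial_G F''=S$, so $S$ is even, null-homologous, and separates $a$ from $b$ in $G$, as required.

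The step I expect to be the main obstacle is the topological bookkeeping behind the incidence picture used above: deducing from the hypothesis ``$G\snip X$ has exactly one more boundary face than $G$'' that $F(G\snip X)=F(G)\cup\{z\}$ and that both cut-copies of every edge of $X$ are incident to this same new face $z$, with no pre-existing boundary face being merged or subdivided. Once that is pinned down, the two-case computation for edges of $X$ collapses and the remainder is routine.
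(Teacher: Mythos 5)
Your proof is correct and takes essentially the same route as the paper's: both exhibit $S$ as the boundary of a set of faces of $G$ by discarding the single new boundary face from one side of the bipartition that $S$ induces on $F(G \snip X)$. Your edge-by-edge verification (including the mod-2 treatment of the two copies of each edge of $X$) merely spells out the incidence bookkeeping that the paper's two-sentence proof leaves implicit.
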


\begin{proof}   
    Let $f$ be the boundary face that is created by cutting along $X$ in $G$.  Let $A,B$ be the bipartition of faces of $G \snip X$ given by the $ab$-separating subgraph $S$; w.l.o.g., take $a \in A$ and $b,f \in B$.  Since $G\snip X$ has just one additional boundary face  ($f$) than $G$, $A$ is a set of faces in $G$.  It follows that $S$ is $ab$-separating in $G$.
\end{proof}

\paragraph{Tight cycles and paths} We say that a cycle $C$ is \emph{tight} if it is the shortest cycle with $\mathds{Z}_2$-homology signature $[C]$.  Tight cycles in all homology classes can be found in time $2^{O(g)}n \log n$ time~\cite{EN11}.  We say that an $x$-to-$y$ path $P$ is tight if $P \cup xy$ is the shortest cycle with signature $[P \cup xy]$ in the graph $G \cup xy$ where $xy$ is embedded on a handle added to the surface connecting a face incident with $x$ to a face incident with $y$.  Note that the faces may coincide, and we may need to add extra edges (with large weights to ensure they are not part of any tight cycle) to make the embedding cellular.  In this way, a tight path can be found in the same time bound as a tight cycle.

\paragraph{Crossing or non-crossing}  Let $H_1$ and $H_2$ be two subsets of edges.  We say that $H_1$ \emph{crosses} $H_2$ if there is a subset of edges $S$ of $H_1 \cap H_2$ such that contracting $S$ results in a vertex $s$ such that the edges $e_1, e_2, e_1',e_2'$ are incident to $s$ and are in this clockwise order around $s$ with $e_1, e_1' \in H_1$ and $e_2,e_2' \in H_2$.  Otherwise, $H_1$ and $H_2$ do not cross.

If $H_1$ is connected and $H_2$ is an even separating subgraph, then the following is an equivalent definition.  Let $S_1, S_2, \ldots, S_k$ be the components of $G \snip H_2$.  $H_1$ crosses $H_2$ if and only if there are two edges $e$ and $f$ in $H_1$ such that the two faces incident to $e$ are in $S_i$ and the two faces incident to $f$ are in $S_j$ for $i \ne j$.

A cycle $C$ is {\em non-self-crossing} if no two subpaths of $C$ cross.  A cycle is {\em weakly simple} if it is non-self-crossing and traverses each edge at most once.  A {\em cycle decomposition} of an even subgraph $H$ is a partition ${\cal C} = \{C_1, C_2, \ldots \}$ of the edges of $H$ such that each cycle in $\cal C$ is simple and no two cycles in $\cal C$ cross.  Chambers, Erickson and Nayyeri (Lemma 3.2~\cite{CEN09}) prove that every even subgraph of a surface-embedded graph has a cycle decomposition.

\paragraph{Cuts} By abuse of notation, we use an $xy$-cut to refer to two equivalent notions: it can be either a subset $C$ of edges whose removal from the graph separates $x$ and $y$, or a bipartition $(X,Y)$ of the vertices such that $x\in X$ and $y\in Y$. The edge subset $C$ is the set of edges with one endpoint in $X$ and the other in $Y$.  We say that a cut $(A,B)$ {\em crosses} cut $(X,Y)$ if neither $A \subseteq X$ nor $A \subseteq Y$.

\paragraph{Uniqueness of minimum cuts and shortest paths} We assume that minimum cuts  do not cross. That is, if $(X,Y)$ is a minimum $xy$-cut and $(A,B)$ is a minimum $ab$-cut, then $(X,Y)$ and $(A,B)$ do not cross.  This is automatically true when all minimum cuts are unique, which, in turn can be assumed to be true with high probability by randomly perturbing the edge weights slightly~\cite{BSW14,MVV87}.  This perturbation also allows us to assume that shortest paths in the dual graph are unique, which is required for both the cycle- and flow-based planar algorithms.

\paragraph{Cut trees} For a graph $G$, an edge-weighted tree $T$ on the same vertex set as $G$ is a cut-tree for $G$ if, for every edge $e$ in $T$, the weight of the cut in $G$ corresponding to the bipartition of the vertices given by $T \setminus \{e\}$ is $w(e)$.  $T$ is a {\em minimum} cut-tree of $G$ if for every pair of vertices $x,y$ the minimum $xy$-cut in $T$ is the same (in value and bipartition of vertices) as the minimum $xy$-cut in $G$.

\paragraph{Region trees} Within our algorithms, it is convenient to represent a cut tree $T$ with a \emph{region tree} $R$~\cite{BSW14}.  $R$ is the unique tree obtained from $T$ by adding a leaf vertex to every node of $T$. Thus, $R$ has one leaf and one internal node for each of the $n$ vertices of $T$. The internal edges of $R$ are exactly the edges of $T$.  Such a region tree is called a {\em complete} region tree.  The cut tree $T$ can be recovered from $R$ by contracting all the leaf edges.

A {\em partial} region tree is any tree that can be obtained from a region tree by contracting a subset of internal edges.  A region tree may refer either to a complete region tree or to a partial region tree, and should be clear from context.  Contracting the leaf edges of a partial region tree and mapping subsets of $V$ to the resulting nodes gives a representation of a {\em partial} cut tree as maintained throughout the standard Gomory--Hu algorithm.

If a region tree $R$ is rooted at an arbitrary internal node, then its internal nodes represent the {\em regions} on one side of each cut of $T$. For a non-root, internal node $v$ of $R$, we define $S_v$ to be the set of leaf descendents of $v$; $S_v$ is one side of the cut $T \setminus \{e\}$ where $e$  is the parent edge of $v$.  The root of $R$ is the region that corresponds to the entire graph.

\paragraph{Cartesian trees} Our query data structure is based on a \emph{Cartesian tree} for an edge-weighted tree $T$. This is a binary tree in which the interior nodes represent edges of $T$ and the leaves represent vertices of $T$. The root node represents the lightest edge of $T$ and its two children are constructed recursively from the two subtrees formed from $T$ by removing this lightest edge. The minimum cut between any two vertices in $T$ can then be found by answering a lowest common ancestor query between the corresponding two leaves of the Cartesian tree. Given an edge-weighted tree~$T$ whose edges have been sorted by weight, the Cartesian tree for $T$ can be constructed and processed for constant-time lowest common ancestor queries in time and space $O(n)$~\cite{DLW14}.

\section{Reduction from bounded genus to planar}\label{sec:g}

We show how to reduce the all-pairs minimum cut problem for a surface-embedded graph $G$ to the planar case.  We do so by recursively cutting along (i) a tight cycle that belongs to some minimum cut or (ii) a tight cycle and a tight path connecting sides of the tight cycle that some minimum cut does not cross.

\subsection{Reducing the genus}

If $S$ is a minimum $ab$-separating subgraph in graph $G$ and $C$ is a cycle in a cycle decomposition of $S$ then  $C$ is a minimum $ab$-separating cycle in $G \snip (S\setminus C)$ for otherwise there would be a cheaper minimum $ab$-separating subgraph $G$.  The following lemma allows us to reduce the problem of finding $S$ to that of finding $C$.  That is, it allows us to find $S \setminus C$, and, in particular to do so without specifying the faces we wish to separate.

\begin{lemma}
\label{lem:tightComponentsInMinCut}
Let $S$ be the minimum $ab$-separating subgraph and let $C_1, C_2, \ldots, C_t$ be a cycle decomposition of $S$ with cycles ordered by increasing cost.  Then $C_i$ is the cheapest cycle having $\Z_2$-homology signature $[C_i]$ for $i = 1, \ldots, t-1$.  Moreover, for any $1\leq h\leq t$, $\bigcup_{j=h}^{t}{C_j}$ is a minimum $ab$-separating subgraph in $G\snip \bigcup_{i=1}^{h-1}{C_i}$.
\end{lemma}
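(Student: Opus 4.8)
The plan is to deduce both assertions from the minimality of $S$ by exchange arguments. Throughout I rely on two facts from the preliminaries: an $ab$-separating subgraph is exactly an even, null-homologous subgraph whose extended signature has $0$-th bit equal to $1$; and if a null-homologous even subgraph is written $\partial U$ for a face set $U$, then its extended signature has $0$-th bit $1$ exactly when $a$ and $b$ lie on opposite sides of $U$. I also assume, as the preliminaries permit, that edge weights are generic, so distinct subgraphs have distinct cost.

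For the first assertion, fix $i\le t-1$ and suppose $C_i$ is not a cheapest cycle with homology signature $[C_i]$; then there is a cycle $C'$ with $[C']=[C_i]$ and $\cost(C')<\cost(C_i)$. I split on whether $C'$ and $C_i$ agree in the $0$-th bit of their extended signatures. If $[C']^{ab}_0=[C_i]^{ab}_0$, put $S'=S\oplus C_i\oplus C'=(S\setminus C_i)\oplus C'$ (using $C_i\subseteq S$); then $[S']=[S]\oplus[C_i]\oplus[C']=0$ and $[S']^{ab}_0=1\oplus[C_i]^{ab}_0\oplus[C']^{ab}_0=1$, so $S'$ is $ab$-separating, while $\cost(S')\le\cost(S)-\cost(C_i)+\cost(C')<\cost(S)$ --- contradicting the minimality of $S$. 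If instead $[C']^{ab}_0\ne[C_i]^{ab}_0$, then $C_i\oplus C'$ is even with $[C_i\oplus C']=0$ and $[C_i\oplus C']^{ab}_0=1$, hence is itself an $ab$-separating subgraph; but $\cost(C_i\oplus C')\le\cost(C_i)+\cost(C')<2\cost(C_i)\le\cost(C_i)+\cost(C_t)\le\cost(S)$, again a contradiction. The middle step $2\cost(C_i)\le\cost(C_i)+\cost(C_t)$ is exactly where $i\le t-1$ enters (so $C_t$ is a separate term of the decomposition with $\cost(C_t)\ge\cost(C_i)$), which is why $i=t$ is excluded.

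For the second assertion I induct on $h$; the case $h=1$ is immediate, so it suffices to pass from $S_h$ --- minimum $ab$-separating in $G_h:=G\snip\bigcup_{i<h}C_i$, with cycle decomposition $C_h,\dots,C_t$ inherited from that of $S$ --- to $S_{h+1}=S_h\setminus C_h$ in $G_{h+1}=G_h\snip C_h$. To see that $S_{h+1}$ is $ab$-separating in $G_{h+1}$: taking the face bipartition $(A,B)$ of $G_h$ with $\partial A=S_h$, $a\in A$, $b\in B$, the facts that $C_h\subseteq\partial A$ and that $C_h$ is a two-sided simple closed curve force one side of $C_h$ to be adjacent only to faces of $A$ and the other only to faces of $B$, so assigning the two boundary faces created by the cut to $A$ and to $B$ respectively yields a face bipartition of $G_{h+1}$ with boundary exactly $S_{h+1}$ that still separates $a$ from $b$. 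To see that $S_{h+1}$ is \emph{minimum} in $G_{h+1}$: if $T$ were a cheaper $ab$-separating subgraph of $G_{h+1}$, then $T$ combined with $C_h$ (adding $C_h$-edges as needed to restore evenness) would, by Claim~\ref{clm:sep}, be an $ab$-separating subgraph of $G_h$ of cost at most $\cost(T)+\cost(C_h)<\cost(S_{h+1})+\cost(C_h)=\cost(S_h)$, contradicting the minimality of $S_h$.

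I expect the delicate point to be the bookkeeping in the second assertion: checking that combining a separator of $G_{h+1}$ with $C_h$ really produces an even, null-homologous subgraph of $G_h$ of the claimed cost, and reconciling the two boundary faces created by cutting along a simple cycle with the single-new-face hypothesis of Claim~\ref{clm:sep} --- presumably by treating the two sides of the cut separately (equivalently, when $C_h$ is separating, the two components of $G_{h+1}$, one of which contains both $a$ and $b$). The first assertion should, by contrast, be a clean two-case exchange once the extended-signature characterization of $ab$-separating subgraphs is in hand.
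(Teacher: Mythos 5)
Your treatment of the first claim is correct and is essentially the paper's own argument: you consider the same two exchange candidates, $(S\setminus C_i)\oplus C'$ and $C_i\oplus C'$, decide which of them is $ab$-separating via the $0$-th bit of the extended signature, and show the relevant one is strictly cheaper than $S$ (your bound $w(C_i\oplus C')<2w(C_i)\le w(C_i)+w(C_t)\le w(S)$ is a valid variant of the paper's chain $w(C')<w(C_i)\le w(C_{i+1})\le w(S\setminus C_i)$, and you correctly isolate where $i\le t-1$ is needed). In the second claim, your argument that $S_{h}\setminus C_h$ stays $ab$-separating in $G_h\snip C_h$ is also sound, but "two-sided simple closed curve" is not by itself the reason one side of $C_h$ sees only $A$-faces and the other only $B$-faces: at a vertex of $C_h$ the side could flip if another cycle of the decomposition crossed $C_h$ there, so you also need that the cycles of a cycle decomposition are pairwise non-crossing (true by definition, but it is doing real work and should be said).

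The genuine gap is the minimality step, exactly where you hedged. Claim~\ref{clm:sep} assumes the cut adds only \emph{one} boundary face, whereas cutting along the simple cycle $C_h$ adds \emph{two}; the paper makes this point explicitly just before Lemma~\ref{lem:TightCycleandPath}, and its accompanying example shows that a separating subgraph of $G\snip C$ need not be separating in $G$ for precisely this reason, so the hypothesis is not a formality. Your proposed repair ("treat the two sides separately, when $C_h$ is separating") does not cover the general case: a cycle in the decomposition of a null-homologous subgraph is typically not itself null-homologous (two homologous non-separating cycles on a torus, say), so $G_{h+1}$ can be connected with both boundary faces in one piece. The fact you need is still true, but it requires its own short argument instead of Claim~\ref{clm:sep}: let $(U,\bar U)$ be the face bipartition of $G_{h+1}$ witnessing the supposedly cheaper separator $T$, with $a\in U$ and $b\in\bar U$, and set $U'=U\cap F(G_h)$. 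Any edge of $G_h$ not in $C_h$ appears once in $G_{h+1}$ with the same two non-boundary faces, so if it is also not in $T$ its two faces lie on the same side; hence the boundary $\partial U'$ taken in $G_h$ is contained in $C_h$ together with the projection of $T$, giving an $ab$-separating subgraph of $G_h$ of weight at most $w(T)+w(C_h)<w(S_{h+1})+w(C_h)=w(S_h)$, the desired contradiction. (This is what the paper's one-line assertion "it follows that $S'$ is $ab$-separating in $G$" is implicitly doing; note it does not cite Claim~\ref{clm:sep} there.) With that substitution your induction goes through.
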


\begin{proof}
  If $t = 1$, the lemma is trivially true.  Herein, assume $t \ge 2$.
  For a contradiction, let $C_i$ be the first cycle such that $C_i$ is
  not the cheapest cycle having its homology signature (with $i < t$).
  Let $C_i'$ be the cheapest cycle such that $[C_i'] = [C_i]$.
  We have:
  \begin{equation}
    w(C_i') < w(C_i) \le w(C_{i+1}) \le w(S
    \setminus C_i)\label{eq:wci}
  \end{equation}

  Since $S$ is null-homologous, $[C'_i] = [C_i] = [S \setminus C_i]$.  
  So, both $C'_i  \oplus C_i$ and $C'_i \oplus (S \setminus C_i)$ are separating.
  Since  $[S]_0^{ab} = 1 = [C_i]_0^{ab} \oplus[S\setminus C_i]_0^{ab}$ exactly one of $[C_i]_0^{ab}$ or $[S\setminus C_i]_0^{ab}$ $=1$.
  That is, exactly one of $[C'_i]_0^{ab}\oplus [C_i]_0^{ab}$ or $[C'_i]^{ab}\oplus [S\setminus C_i]_0^{ab}$ $=1$.  Thus, either $C'_i
  \oplus C_i$ or $C'_i \oplus (S \setminus C_i)$ is $ab$-separating. 
  By \autoref{eq:wci}, both  $w(C'_i \oplus C_i) < w(S)$ and $w(C'_i \oplus (S \setminus C_i)) < w(S)$.
  This contradicts that $S$ is the minimum $ab$-separating subgraph.
  
  Since $S$ is $ab$-separating in $G$, $H = \bigcup_{j=h}^{t}{C_j}$ is $ab$-separating in $G' = G\snip \bigcup_{i=1}^{h-1}{C_i}$.
  For a contradiction, suppose that $H' \neq H$ is the minimum $ab$-separating cycle in $G'$, that is $|H'| < |H|$.
  It follows that $S' = (\bigcup_{i=1}^{h-1}{C_i})\cup H'$ is $ab$-separating in $G$, and $|S'| < |S|$, contradicting that $S$ is the minimum $ab$-separating subgraph in $G$.
\end{proof}

\newcommand{\doubleTorusfigure}{
\begin{figure}[h]
  \centering
    \includegraphics[height=1.15in]{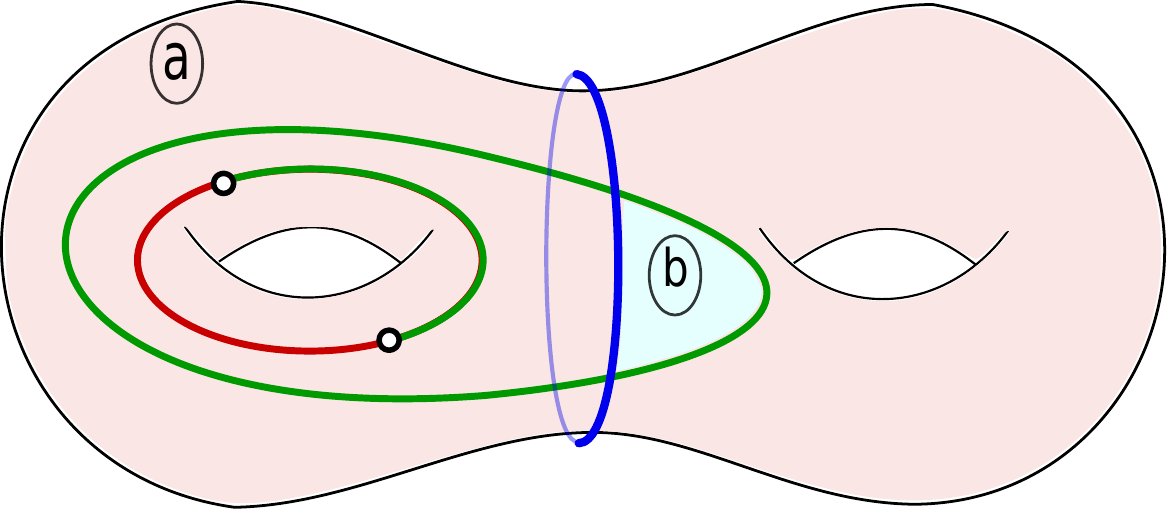}
  \caption{The proof of \autoref{lem:tightSameSide}.  The blue cycle corresponds to $S$, the red path corresponds to $A$, and the green subgraph corresponds to $H$ (shown crossing $S$ in contradiction to the Lemma).  The light region is $F_b \cap J_b$. }
  \label{fig:tightSameSide}
\end{figure}}

The following lemma is stronger than Lemma 6.1 of Erickson et al.~\cite{EFN12}, but the proof technique is similar.  

\begin{lemma}
\label{lem:tightSameSide}
Let $S$ be the minimum $ab$-separating subgraph.
Let $A$ be any subgraph that does not cross $S$, and let $H$ be the minimum subgraph such that $A\oplus H$ is null-homologous.  Then $H$ does not cross $S$.
\end{lemma}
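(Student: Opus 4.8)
The plan is to argue by contradiction, assuming $H$ crosses $S$, and then to use a surgery argument to build a cheaper subgraph with the same homology as $H$, contradicting minimality of $H$. First I would invoke \autoref{lem:tightComponentsInMinCut} or rather just the fact from its proof-style reasoning: since $S$ is an even separating subgraph, it has a cycle decomposition ${\cal C} = \{C_1, \dots, C_t\}$ of pairwise non-crossing simple cycles, and cutting $G$ along $S$ yields components $S_1, \dots, S_k$. Since $A$ does not cross $S$, the edges of $A$ all lie (after cutting) within a single component $S_j$; say $a \in S_1$ and $b$ in another component, and $A$ lives in $S_b := \bar{S_1}$ (or we can choose whichever side, symmetrizing). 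Here I would set up the notation carefully: let $F_b$ denote the union of faces on the ``$b$-side'' of $S$, and note $A$, which does not cross $S$, has all its edges incident only to faces within $F_b$ (or its complement) — this is exactly the equivalent characterization of ``crossing'' given in the Preliminaries for $A$ connected versus $S$ even separating; if $A$ is not connected we apply this componentwise, with all components on the same side since $A$ as a whole does not cross $S$.

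Next, suppose for contradiction that $H$ crosses $S$. I want to decompose $H$ along $S$. The key observation is that $S$ separates the surface, and one can write $H = H_{\text{in}} \oplus H_{\text{out}} \oplus (\text{part of } S)$, where $H_{\text{in}}$ consists of the portions of $H$ lying on the $b$-side of $S$ and $H_{\text{out}}$ the portions on the other side, glued along $S$ appropriately so that each piece becomes an even subgraph. More precisely, following the technique of Erickson et al.\ (Lemma 6.1 of~\cite{EFN12}): cut $G$ along $S$ to get $G \snip S$; the subgraph $H$ induces subgraphs $H_b$ on the component(s) containing $F_b$ and $H_{\bar b}$ on the rest, but these need not be even since $H$ may pass through vertices of $S$. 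We fix this by adding, to each of $H_b$ and $H_{\bar b}$, an appropriate subset of the (doubled) edges of $S$ so that parity is restored at every vertex of $S$ — call these corrections $S_b \subseteq S$ and $S_{\bar b} = S \oplus S_b$. Then $H_b \cup S_b$ and $H_{\bar b} \cup S_{\bar b}$ are even subgraphs in $G \snip S$, hence (identifying back in $G$) even subgraphs of $G$, and $H = (H_b \cup S_b) \oplus (H_{\bar b} \cup S_{\bar b})$. Since $S$ is null-homologous, both $[H_b \cup S_b] = [A]$ modulo something and — here is where I need to be careful — I claim that one of $H_b \cup S_b$ or $H_{\bar b} \cup S_{\bar b}$ (say the one on the side where $A$ lives, $H_b \cup S_b$, possibly after adding $S$ which is null-homologous) satisfies $[A \oplus (H_b \cup S_b)] = 0$. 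This is because $A$ lies entirely on the $b$-side, so $A \oplus H$ being null-homologous, combined with $S$ being null-homologous, forces $A \oplus (H_b \cup S_b)$ to be null-homologous as well: the ``out'' part $H_{\bar b} \cup S_{\bar b}$ is itself null-homologous since it equals $(A \oplus H) \oplus (A \oplus (H_b \cup S_b))$, but it lives on a side disjoint from $A$ — hmm, this needs the homology bookkeeping to be done via the cut surface. The cleanest route: $[H] = [A]$, $[S]=0$, and $[H_b \cup S_b] + [H_{\bar b}\cup S_{\bar b}] = [H] = [A]$; since each of these two subgraphs is supported on one side of the separating subgraph $S$, and the homology classes ``localize'' (every homology class of $G$ restricted to the $b$-side plus restricted to the $\bar b$-side determines it), the only way to get $[A]$, which is ``localized to the $b$-side,'' is to have $[H_{\bar b} \cup S_{\bar b}] = 0$ and $[H_b \cup S_b] = [A]$ — possibly after shifting one of them by the null-homologous $S$.

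The payoff: $H_b \cup S_b$ is a subgraph with $A \oplus (H_b \cup S_b)$ null-homologous, and it does not cross $S$ by construction (it lies on one side). Now I compare weights. Since $H$ crosses $S$, $H$ has edges strictly on both sides of $S$, so neither $H_b$ nor $H_{\bar b}$ is empty, and moreover at least one of them is a proper ``subpart'': $w(H_b) < w(H)$ and $w(H_{\bar b}) < w(H)$ — wait, I need to also account for the added $S_b$ edges, which could be expensive. This is precisely the main obstacle, and it is handled exactly as in Erickson et al.: the added correction edges $S_b$ form a subset of the tight cycle decomposition of $S$, and by an exchange/crossing argument (using that $S$ is a shortest separating subgraph and the corrections can be routed along $S$ without increasing total cost beyond swapping) one shows $w(H_b \cup S_b) + w(H_{\bar b}\cup S_{\bar b}) \le w(H) + w(S)$ is the wrong bound; instead one argues directly that $w(H_b \cup S_b) \le w(H)$ with the crossing forcing strict inequality, or more robustly, that if $H$ crosses $S$ then we can reroute $H$ through $S$ to get a non-crossing subgraph of no greater weight in the same homology class, so WLOG $H$ does not cross $S$ — but then the statement to prove holds. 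The honest main step, which I expect to be the crux, is establishing $w(H_b \cup S_b) \le w(H)$: this follows because the edges of $S_b$ that we add are ``paid for'' by the edges of $S$ that $H$ crosses, using that $S$ is a minimum (hence tight, by \autoref{lem:tightComponentsInMinCut}) separating subgraph; any crossing of $H$ with $S$ creates a cheaper local rerouting. I would formalize this exactly as the ``light region $F_b \cap J_b$'' argument hinted at in the figure caption for this lemma: $J_b$ is the $b$-side of $H$ (thinking of $A \oplus H$ as bounding $J$), and the region $F_b \cap J_b$ has boundary contained in $(S \cup H)$-edges; minimality of $S$ on one piece and minimality of $H$ on the complementary piece of this region yield the contradiction when the crossing is nonempty. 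The conclusion is that $H$ cannot cross $S$, as desired.
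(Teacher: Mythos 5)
There is a genuine gap, and it sits exactly where you flag it. Your plan hinges on producing a one-sided competitor $H_b\cup S_b$ (the part of $H$ on the $b$-side plus a parity-correcting subset $S_b\subseteq S$) and then showing $w(H_b\cup S_b)\le w(H)$, ``with the crossing forcing strict inequality.'' Nothing in your argument pays for the edges of $S_b$: minimality of $H$ gives no control over how heavy the correction along $S$ is, and minimality of $S$ is never brought to bear on this comparison, so the asserted bound is unsupported. Your two fallbacks do not repair this: the claim ``if $H$ crosses $S$ we can reroute $H$ through $S$ at no greater weight in the same homology class, so WLOG $H$ does not cross $S$'' is circular --- that rerouting statement (together with uniqueness of the minimum $H$) \emph{is} the lemma; and you explicitly dismiss the combined bound $w(\text{piece}_1)+w(\text{piece}_2)\le w(H)+w(S)$ as ``the wrong bound,'' when a bound of precisely that shape is the engine that makes the argument close. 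Also, the mechanism ``crossing forces strict inequality'' is not justified anywhere; as it turns out, strictness is not what one should aim for. (Your homology-localization step forcing $[H_{\bar b}\cup S_{\bar b}]=0$ is also only sketched, though it is morally sound; in any case the paper's route avoids needing it.)

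For comparison, the paper never splits $H$ by sides at all. With $F_a,F_b$ the two sides of $S$ and $J_b$ the $b$-side of the null-homologous $A\oplus H$, it sets $F=F_b\cap J_b$ and plays two competitors off each other simultaneously: $\partial F$ is $ab$-separating and avoids $A\setminus S$, so $w(\partial F)\ge w(S)$ and $\partial F\subseteq H\cup S$; and $H'=H\oplus\partial F\oplus S$ satisfies $[H']=[H]$ (both $\partial F$ and $S$ are face boundaries), so $w(H')\ge w(H)$. An edge-by-edge count over $H\cup S$ then gives the combined inequality $w(\partial F)+w(H')\le w(S)+w(H)$, forcing all three inequalities to be equalities; in particular $w(\partial F)=w(S)$, and uniqueness of minimum cuts gives $\partial F=S$, which yields that $H$ does not cross $S$. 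The lesson relative to your plan: minimality of $S$ and minimality of $H$ must be used \emph{together}, via the paired objects $\partial F$ and $H'$, because no one-sided inequality like $w(H_b\cup S_b)\le w(H)$ is available; and the closing move is equality plus uniqueness, not a strict-inequality contradiction. Your final sentence gestures at this (``minimality of $S$ on one piece and minimality of $H$ on the complementary piece''), but the accounting that turns that slogan into inequalities is exactly the missing content.
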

\ifFull\begin{proof} 
The even subgraph $S$ separates the faces of $G$ into two sets $F_a$ and $F_b$.  
Without loss of generality, we assume that $A$ and $a$ are on the same side of $S$ (i.e.\ contained in the same piece of $G \snip S$), $a\in F_a$ and $b\in F_b$.

\doubleTorusfigure

$A \oplus H$ is a null-homologous even subgraph, so it separates the faces of $G$ into subsets $J_b$ and $J'_b$.  Assume, without loss of generality, that $b\in J_b$.  
Let $F = F_b\cap J_b$ and note that $a\notin F$ and $b\in F$, so $\partial(F)$ is $ab$-separating.  
Also, each edge of $\partial(F)$ is on the boundary of a face of $F_b$, so it does not cross $S$.  Finally, $b\in F$, therefore, $\partial(F)$ and $b$ are on the same side of $S$.
In the rest of the proof, we show that $\partial(F)$ must be identical to $S$, thus $H$ does not cross $S$.

Since $S$ is the minimum $ab$-separating subgraph,
\begin{equation}
\label{eqn:FgreaterThanC}
w(\partial F) \geq w(S).
\end{equation}

Each boundary edge of $F = F_b\cap J_b$ should be a boundary edge in at least one of $F_b$ and $J_b$.  Additionally, no boundary edge of $F$ belongs to $A\backslash S$ as $A$ and $b$ (so $A$ and $\partial(F)$) are on different sides of $S$.  Thus, $\partial F \subseteq H\cup S$.

Let $H' = H \oplus \partial F \oplus S$.  Note that $H' \subset H \cup S$ (since $\partial F \subseteq H\cup S$).  Because $\partial F$ and $S$ are boundaries of sets of faces, they are both null-homologous, so $[H'] = [H]$.  Since $H$ is minimum,
\begin{equation}
\label{eqn:deltaPrimeGreaterThanDelta}
w(H') \geq w(H).
\end{equation}

Finally, we show the following inequality by bounding the contribution of each edge $e \in H \cup S$ to the sides of the inequality.
\begin{equation}
\label{eqn:three}
w(\partial F) + w(H') \leq w(S) + w(H).
\end{equation}
\begin{itemize} [noitemsep,nolistsep]
\item If $e \in S\cap H$, then $e$ contributes $2w(e)$ to the right side.  Since, by the construction, at most one copy of each edge is included in each of $\partial F$ and $H'$, $e$ can contribute at most $2w(e)$ to the left side.
\item If $e \in S\oplus H$, then $e$ is in exactly one of $\partial F$ and $H'$ by the definition of $H'$ ($H' = \partial F \oplus H \oplus S$).  In this case, $e$ contributes exactly $w(e)$ to both sides of the inequality. 
\end{itemize}

Therefore, all Inequalities~\eqref{eqn:FgreaterThanC},~\eqref{eqn:deltaPrimeGreaterThanDelta}~and~\eqref{eqn:three} must be equalities.  
In particular, $w(\partial F) = w(S)$.  Thus, the uniqueness of the minimum cut (in the dual graph) implies that $\partial F$ and $S$ are identical, which in turn implies that $H$ does not cross $S$.
\end{proof}
\fi

\begin{lemma}
\label{lem:pathSameSide}
Let $S$ be a minimum $ab$-separating subgraph.  Let $P$ be an $x$-to-$y$ path that does not cross $S$. 
Let $H$ be the shortest subgraph such that $H \oplus P$ is a null-homologous even subgraph.  $H$ contains a tight $x$-to-$y$ path $P'$ that does not cross $S$.
\end{lemma}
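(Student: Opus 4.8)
The plan is to reduce the path case to the cycle case using the handle-augmented graph $\widehat{G} := G\cup xy$ from the definition of a tight path, where $xy$ is drawn along a handle joining a face incident to $x$ to a face incident to $y$ (with auxiliary high-weight edges added, if needed, to keep the embedding cellular). In $\widehat{G}$, both $\widehat{P} := P\cup xy$ and $\widehat{H} := H\cup xy$ are even subgraphs, since $x$ and $y$ are the only vertices of odd degree in $P$ (hence in $H$), and $\widehat{H}\oplus\widehat{P} = H\oplus P$, so $\widehat{H}$ and $\widehat{P}$ are homologous in $\widehat{G}$. The first thing to establish is that $\widehat{H}$ is a minimum-weight even subgraph of $\widehat{G}$ in the homology class $[\widehat{P}]$. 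This uses that $H$ is the cheapest subgraph of $G$ with $H\oplus P$ null-homologous: the class $[\widehat{P}]$ is nontrivial around the handle, so any even subgraph $Z$ of $\widehat{G}$ in that class either uses an auxiliary (high-weight) edge---in which case $w(Z)\ge w(\widehat{H})$ trivially---or has the form $Z'\cup xy$ with $Z'\subseteq G$; in the latter case, since null-homology in $\widehat{G}$ restricts to null-homology in $G$ for subgraphs of $G$, we get $Z'\oplus P$ null-homologous in $G$, whence $w(Z) = w(Z')+w(xy)\ge w(H)+w(xy) = w(\widehat{H})$.

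Next I would extract the path. By the cycle-decomposition lemma (Lemma 3.2 of~\cite{CEN09}), $\widehat{H}$ has an edge-partition $\widehat{H} = C_1\oplus C_2\oplus\cdots\oplus C_k$ into simple, pairwise non-crossing cycles; since these cycles partition the edges and $xy\in\widehat{H}$, exactly one of them---say $C_1$---contains $xy$. Set $P' := C_1\setminus\{xy\}$: a simple $x$-to-$y$ path contained in $H$. To see that $P'$ is tight it suffices, by definition, to show that $C_1 = P'\cup xy$ is the shortest cycle of $\widehat{G}$ with signature $[C_1]$, and I would prove the stronger claim that $C_1$ is minimum-weight among all even subgraphs of $\widehat{G}$ in its homology class, by an exchange argument in the spirit of \autoref{lem:tightComponentsInMinCut}: if $D$ were an even subgraph with $[D] = [C_1]$ and $w(D) < w(C_1)$, then with $R := C_2\oplus\cdots\oplus C_k$ (edge-disjoint from $C_1$) we would have $[D\oplus R] = [C_1\oplus R] = [\widehat{H}] = [\widehat{P}]$ and $w(D\oplus R)\le w(D)+w(R) < w(C_1)+w(R) = w(\widehat{H})$, contradicting the minimality of $\widehat{H}$ just established. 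The perturbation assumption then makes $C_1$ the unique shortest cycle with its signature, so $P'$ is tight.

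It remains to show that $P'$ does not cross $S$. For this I would apply \autoref{lem:tightSameSide} with the role of the subgraph ``$A$'' played by $P$ itself: $P$ does not cross $S$ by hypothesis, and $H$ is precisely the shortest subgraph with $H\oplus P$ null-homologous, so the lemma yields that $H$ does not cross $S$. Finally, ``crosses'' is monotone in its first argument under passing to a subset: a witness that $P'$ crosses $S$---a set $T\subseteq P'\cap S$ whose contraction produces a vertex around which two edges of $P'$ and two edges of $S$ alternate cyclically---is, because $T\subseteq P'\cap S\subseteq H\cap S$ and those two $P'$-edges lie in $H$, also a witness that $H$ crosses $S$. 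Hence $P'$ does not cross $S$, which completes the proof.

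I expect the main obstacle to be the bookkeeping at the $G$/$\widehat{G}$ interface in the first paragraph: one must check---as is already implicit in the paper's definition of a tight path---that adding the handle and the auxiliary high-weight edges neither creates nor destroys null-homology for subgraphs of $G$, and that these auxiliary edges can never appear in a shortest cycle such as $C_1$. A secondary point worth stating explicitly is that routing the extraction of $P'$ through the cycle-decomposition lemma, rather than an ad hoc decomposition of the $xy$-join $H$ into a path and cycles, is what guarantees $P'$ is a simple path obtained without self-crossings---which is what makes the statement ``$P'$ does not cross $S$'' meaningful and reduces it, as above, to the non-crossing of $H$.
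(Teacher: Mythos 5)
Your route is close in spirit to the paper's: the paper decomposes the component of $H$ containing $x$ and $y$ into an $x$-to-$y$ path plus cycles via an Eulerian walk and gets tightness from the minimality of $H$ by essentially the exchange you spell out, and your non-crossing step (apply \autoref{lem:tightSameSide} with $A=P$, then use that a crossing witness for $P'\subseteq H$ is a crossing witness for $H$) is identical to the paper's. Extracting a simple $P'$ by passing to the handle graph and invoking the cycle decomposition of Chambers--Erickson--Nayyeri is a legitimate, and in some ways more careful, variant.

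However, one step fails as written: the claim that $\widehat{H}$ and $\widehat{P}$ are homologous in $\widehat{G}$, i.e.\ that $H\oplus P$ remains null-homologous after the handle is added. Null-homology does not persist under adding a handle: we have $H\oplus P=\partial F$ for a union of faces $F$ of $G$, and if the face $f_1$ (incident to $x$) and the face $f_2$ (incident to $y$) to which the handle is attached lie on opposite sides of $H\oplus P$ (say $f_1\in F$, $f_2\notin F$), then in $\widehat{G}$ the subgraph $H\oplus P$ becomes non-separating, hence not null-homologous; nothing in the definition of a tight path forces $f_1,f_2$ onto the same side of $H\oplus P$. So ``$\widehat{H}$ is minimum in the class $[\widehat{P}]$'' is not the statement you can use, and may even be vacuous because $\widehat{H}$ need not lie in that class. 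Fortunately the gap is local: your exchange argument only needs that $\widehat{H}$ is minimum in its \emph{own} class $[\widehat{H}]$, and your case analysis proves exactly that when run on $Z\oplus\widehat{H}$ instead of $Z\oplus\widehat{P}$: parity of crossings with the handle's meridian forces any even $Z$ with $[Z]=[\widehat{H}]$ to use $xy$ or an auxiliary edge; in the $xy$ case $Z=Z'\cup xy$ with $Z'\oplus H$ null-homologous in $\widehat{G}$, hence in $G$ (this restriction direction is the one that is true), and then $Z'\oplus P=(Z'\oplus H)\oplus(H\oplus P)$ is null-homologous in $G$, so $w(Z')\ge w(H)$ by the minimality of $H$. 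With that substitution, the rest of your argument --- the exchange showing $C_1$ is tight and the non-crossing conclusion --- goes through.
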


\begin{proof}
For $H\oplus P$ to be an even subgraph, all vertices of $H$ except $x$ and $y$ must be even degree and $x$ and $y$ must have odd degree.  
Therefore, $x$ and $y$ must be in the same connected component $C_{xy}$ of $H\oplus P$.
Further, $C_{xy}$ has an Eulerian $x$-to-$y$ walk (i.e., a walk that uses every edge exactly once), and so $C_{xy}$ can be decomposed into an $x$-to-$y$ path $P'$ and a set of cycles ${\cal C}_{xy}$ by iteratively removing cycles from the walk.  

Since $H\oplus P$ is tight, $P'$ must be a tight path and all the cycles in ${\cal C}_{xy}$ must be tight, for otherwise, one could replace $P'$ or a cycle in ${\cal C}_{xy}$ with a cheaper path or cycle while not changing $[H \oplus P]$.

By \autoref{lem:tightSameSide}, $H$ does not cross $S$, and so in particular, $P'$ does not cross $S$, giving this lemma.
%
\end{proof}

\begin{figure}[h]
  \centering
    \includegraphics[height=1.15in]{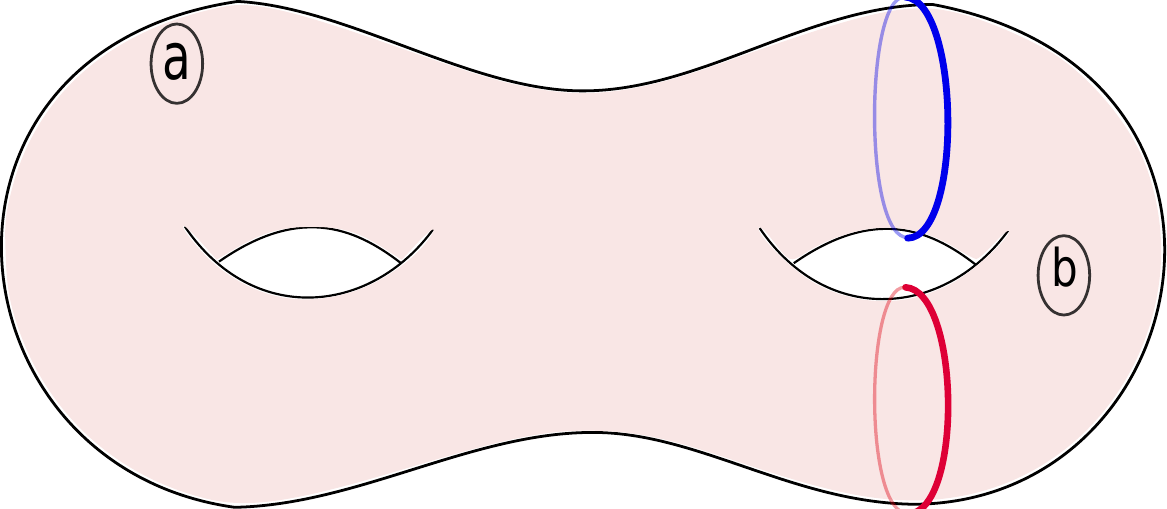}
  \caption{Let {\color{blue}$C$} be the blue cycle. The red cycle is $ab$-separating in $G$ cut along $C$ but not in $G$.}
  \label{fig:tightSameSide}
\end{figure}

Although cutting along a cycle $C$ that does not cross a minimum $ab$-separating subgraph $S$ reduces the genus of the surface, $S$ may not be a minimum $ab$-separating subgraph in $G \snip C$ since the minimum $ab$-separating subgraph in $G \snip C$ may separate copies of $C$ and so may not be separating in $G$.  See Figure~\ref{fig:tightSameSide} for an illustration. To overcome this, we use \autoref{lem:pathSameSide} to witness a tight path $P$ connecting the two copies of $C$ in $G\snip C$.  $G \snip (P \cup C)$ then has one boundary formed by two copies of each edge in $P \cup C$.  By Claim~\ref{clm:sep}, a separating subgraph in $G \snip (P \cup C)$ will also be separating in $G$.

\begin{lemma}
\label{lem:TightCycleandPath}
Let $G$ be a graph embedded on a surface with $b$ boundaries and genus $g$.
Let $S$ be the minimum $ab$-separating subgraph of $G$.  If $S$ is composed of at most $g$ cycles then
there is a tight cycle $C$ and a tight $C$-to-$C$ path $P$ such that 
\begin{enumerate}
\item [(1)] $G \snip (C \cup P)$ is embedded on a surface of genus $g-1$ with $b+1$ boundaries and 
\item [(2)] neither $C$ nor $P$ crosses $S$.
\end{enumerate}
Moreover, $S$ is the minimum $ab$-separating subgraph in $G\snip(C\cup P)$.
\end{lemma}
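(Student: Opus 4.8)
The plan is to obtain $C$ as a tight, non-separating cycle that does not cross $S$, and $P$ as a tight path joining the two copies of $C$ created by cutting along $C$, again not crossing $S$; then statement~(1) follows from the effect of surgery on the Euler characteristic, statement~(2) is immediate from the construction, and the last sentence follows from Claim~\ref{clm:sep} together with uniqueness of minimum cuts.

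\textbf{Producing the cycle.} The subgraph $S$ partitions the faces of $G$ into $F_a\ni a$ and $F_b\ni b$, so $G\snip S$ splits as the disjoint union of an ``$a$-side'' and a ``$b$-side'' (every edge of $S$ separates an $F_a$-face from an $F_b$-face, so components are monochromatic). Writing $t\le g$ for the number of cycles in a cycle decomposition of $S$ and $k_a,k_b\ge 1$ for the numbers of components on the two sides, cutting along circles preserves $\chi$ and each of the $t$ cycles contributes one new boundary circle to each side, so a short Euler-characteristic count gives that the total genus of $G\snip S$ equals $g+(k_a+k_b)-t-1\ge g+2-g-1=1$. Hence some component $K$ of $G\snip S$ has positive genus, and since $K$ is cellularly embedded it contains a simple cycle $C_0$ that is non-separating in $K$ — hence non-separating on the whole surface, so $[C_0]\ne 0$ — and, lying in one component of $G\snip S$, $C_0$ does not cross $S$. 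Applying \autoref{lem:tightSameSide} with $A:=C_0$, the minimum even subgraph $H$ with $[H]=[C_0]$ does not cross $S$. Decomposing $H$ into a non-crossing family of simple cycles (Lemma~3.2 of~\cite{CEN09}), each of these cycles is tight (a cheaper cycle of the same signature would yield a cheaper $H$), and since $[H]=[C_0]\ne 0$ some cycle $C$ of the decomposition has $[C]\ne 0$. Thus $C$ is tight, is contained in $H$ and so does not cross $S$, and $[C]\ne 0$ so $C$ is non-separating on the surface. One must in addition check that $C$ is non-separating \emph{within its own component} of $G\snip S$: a proper subsurface of that component bounded by $C$ would, after replacing the boundary pieces it inherits from $S$ and from $\partial G$ by null-homologous subgraphs, witness $[C]=0$, a contradiction.

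\textbf{Producing the path.} Since $C$ is non-separating, $G\snip C$ is connected, of genus $g-1$, with two extra boundary circles $C',C''$ (the copies of $C$). Because $C$ is non-separating within its component of $G\snip S$, the copies $C'$ and $C''$ lie in a common component of $(G\snip C)\snip S=(G\snip S)\snip C$, so there is a $C'$-to-$C''$ path avoiding $S$. To make it tight I would work in $G$ itself, where $S$ is known to be the minimum $ab$-separating subgraph: take the corresponding $x$-to-$y$ path $P_0$ with $x,y\in C$, apply \autoref{lem:pathSameSide} to get a tight $x$-to-$y$ path $P$ inside the minimum subgraph of that lemma, so that $P$ does not cross $S$; and observe that ``$P$ joins the two copies of $C$'' is a homological property — it is detected by the mod-$2$ intersection number of the loop $P\cup xy$ with $C$, equivalently by one bit of the signature — hence it survives the replacement of $P_0$ by a homologous tight path.

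\textbf{Assembling.} Cutting $G\snip C$ further along the arc $P$ from $C'$ to $C''$ merges those two boundary circles into one and leaves the genus unchanged, so $G\snip(C\cup P)$ has genus $g-1$ and $b+1$ boundaries; this is~(1), and~(2) holds by the construction of $C$ and $P$. For the final assertion: since neither $C$ nor $P$ crosses $S$, $S$ is still an $ab$-separating subgraph of $G\snip(C\cup P)$, of weight $w(S)$; conversely $G\snip(C\cup P)$ has exactly one more boundary face than $G$, so Claim~\ref{clm:sep} shows that every $ab$-separating subgraph of $G\snip(C\cup P)$ is $ab$-separating in $G$ and therefore has weight at least $w(S)$. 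Hence $S$ is a minimum $ab$-separating subgraph of $G\snip(C\cup P)$, and by uniqueness of minimum cuts in the perturbed dual it is the minimum one.

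\textbf{Main obstacle.} The crux is the topological positioning of $C$: ensuring that the tight cycle extracted from $H$ can be taken non-separating \emph{within its component} of $G\snip S$, so that its two copies can be reconnected by a path avoiding $S$, and then checking that tightening that path through \autoref{lem:pathSameSide} preserves the property of joining the two copies. Once this is settled, the Euler-characteristic bookkeeping, the exchange/minimality arguments, and the appeal to Claim~\ref{clm:sep} are routine.
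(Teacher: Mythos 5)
Your overall route is the paper's: find a non-separating cycle in $G\snip S$, replace it by the tight representative of its class via \autoref{lem:tightSameSide} and take a cycle of its cycle decomposition, then connect the two copies of $C$ by a path not crossing $S$, tighten it via \autoref{lem:pathSameSide}, and finish with Claim~\ref{clm:sep} plus uniqueness; your Euler-characteristic count for the positive-genus component and your insistence on choosing a decomposition cycle with $[C]\neq 0$ are useful details that the paper only asserts. However, the two steps you yourself call the crux are not actually closed by your arguments. The claim that $[C]\neq 0$ forces $C$ to be non-separating inside its component $K$ of $G\snip S$ fails for general $S$: the subsurface of $K$ cut off by $C$ is bounded by $C$ together with copies of \emph{individual} cycles of the decomposition of $S$ (plus boundary circles of $G$), and those cycles need not be null-homologous --- only their total $[S]$ is $0$. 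For instance, if $S$ consists of two disjoint homologous non-separating cycles on a genus-$2$ surface, a cycle parallel to one boundary copy separates its component of $G\snip S$ yet has nonzero signature. Your argument does suffice when $S$ is a single cycle, which is the only way \autoref{lem:recurse} invokes this lemma, and the paper itself merely asserts this step; but as a proof of the lemma as stated it is a gap.

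The path step is the more serious problem, because there you knowingly deviate from the paper: you tighten in $G$ (so that \autoref{lem:pathSameSide} can be applied with $S$ minimum in the ambient graph), but that lemma only guarantees the tight path does not cross $S$ --- nothing prevents it from crossing $C$ in its interior. The mod-$2$ intersection bit you invoke is indeed a homological invariant, but it only controls on which sides the path departs and arrives at $C$ \emph{assuming} it meets $C$ nowhere else; it cannot rule out interior crossings, and a path that crosses $C$ does not lift to a single arc of $G\snip C$ at all, so you obtain neither the surgery bookkeeping of (1) nor a legitimate tight $C$-to-$C$ path (tightness for $\mathcal{CP}$ is measured in $G\snip C$). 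The paper avoids this by taking the untightened path $P'$ between the two copies \emph{inside} $G\snip C$ and tightening there, so non-crossing with $C$ is automatic by construction. To salvage your variant you would need an additional, non-homological uncrossing or exchange argument showing that the tight path can be chosen not to cross the tight cycle $C$; without it, the construction of $P$ does not deliver what the lemma requires.
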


\begin{proof}
  Since $S$ has at most $g$ cycles, there is a non-separating cycle $C'$ in $G \snip S$.  Let $H'$ be the shortest even subgraph that is homologous to $C'$, and let $C$ be any cycle in the cycle decomposition of $H'$.  By \autoref{lem:tightSameSide}, $H'$ does not cross $S$, so, in particular, $C$ does not cross $S$.  Since $C$ is non-separating, $G \snip C$ has genus one less than $G$, or $g-1$.  Taking the copies of $C$ in $G \snip C$ to be boundaries, $G \snip C$ has $b+2$ boundaries.

Since $C$ is non-separating for $G\snip S$, there exists a path $P'$ between the two copies of $C$ in $G \snip C$ that does not cross $S$. Let $H$ be the minimum  subgraph such that $P'\oplus H$ is null-homologous.  By Lemma~\ref{lem:pathSameSide}, $H$ contains a tight path $P$ that does not cross $S$ and connects the two copies of $C$ in $G \snip C$. Then $G \snip C \snip P$ has genus $g-1$ and $b+1$ boundaries.

Since $S$ does not cross $C$ and $P$, it is $ab$-separating in $G' = G\snip(C\cup P)$.  
For a contradiction, suppose that $S'\neq S$ is the minimum $ab$-separating subgraph in $G'$, that is $|S'|<|S|$.  By Claim~\ref{clm:sep}, $S'$ is $ab$-separating in $G$, which contradicts the assumption that $S$ is the minimum $ab$-separating subgraph in $G$.
\end{proof}

We call the cycle and path described by Lemma~\ref{lem:TightCycleandPath} a {\em tight cycle-path pair}.
The following lemma ensures the possibility of reducing the genus of a graph, while looking for a minimum $ab$-separating subgraph, by cutting a long either a tight cycle or a tight cycle path pair.

\begin{lemma}
\label{lem:recurse}
Let $S$ be a minimum $ab$-separating subgraph in a graph $G$ embedded in a surface of genus $g$.  If $g\geq 1$, at least one of the following conditions holds.
\begin{enumerate}
\item [(1)]  There is a tight cycle $C \subsetneq S$. In this case, $S\backslash C$ is a minimum $ab$-separating subgraph in $G\snip C$.
\item [(2)] There is a tight cycle-path pair  $(C,P)$ in $G$ such that $C$ and $P$ do not cross $S$.  In this case, $S$ is a minimum $ab$-separating subgraph in $G\snip(C\cup P)$.
\end{enumerate}
\end{lemma}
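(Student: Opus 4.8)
The statement is a case analysis driven by the number of cycles in a cycle decomposition of $S$. Let $C_1, C_2, \ldots, C_t$ be a cycle decomposition of $S$, ordered by increasing cost, as guaranteed by Chambers--Erickson--Nayyeri. Since $G$ has genus $g \ge 1$ and a minimum separating subgraph in genus $g$ uses at most $g+1$ cycles, we have $t \le g+1$. The plan is to split on whether $t = g+1$ (the "full" case) or $t \le g$ (the "deficient" case).

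\emph{Case $t \ge 2$: condition (1) applies.} Here I would invoke \autoref{lem:tightComponentsInMinCut}: with $t \ge 2$ and $i = 1 < t$, the cheapest cycle $C_1$ is tight, i.e.\ it is the shortest cycle in its $\Z_2$-homology class, and it is a proper subgraph of $S$ (since $t \ge 2$ means $S \setminus C_1 \ne \emptyset$). The same lemma, applied with $h = 2$, gives that $\bigcup_{j=2}^t C_j = S \setminus C_1$ is a minimum $ab$-separating subgraph in $G \snip C_1$. This is exactly condition (1). Note this case does not even require $g \ge 1$ — it only needs $t \ge 2$ — but it is consistent with the statement.

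\emph{Case $t = 1$: condition (2) applies.} Now $S$ is a single simple cycle. Since $G$ has genus $g \ge 1$, a single cycle cannot use up all of the topological complexity; more precisely, $S$ consists of $t = 1 \le g$ cycles, so the hypothesis of \autoref{lem:TightCycleandPath} is met (with $b$ the current number of boundaries, possibly $0$). Applying that lemma produces a tight cycle $C$ and a tight $C$-to-$C$ path $P$ such that $G \snip (C \cup P)$ drops to genus $g-1$, neither $C$ nor $P$ crosses $S$, and $S$ remains the minimum $ab$-separating subgraph in $G \snip (C \cup P)$. This $(C,P)$ is by definition a tight cycle-path pair, so condition (2) holds.

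\emph{Combining.} The two cases $t \ge 2$ and $t = 1$ are exhaustive (a cycle decomposition of a nonempty even subgraph has $t \ge 1$, and $S$ is nonempty because $a$ and $b$ must be separated), so at least one of (1), (2) holds. The only subtlety worth flagging is the interface between the hypotheses: \autoref{lem:TightCycleandPath} is stated for "at most $g$ cycles," and in the $t=1$ branch we have exactly one cycle, which is $\le g$ precisely because $g \ge 1$; this is where the $g \ge 1$ assumption in the present lemma is actually used, and it is the one place to be careful. Everything else is a direct citation of the preceding lemmas, so there is no real obstacle — the work has already been done in \autoref{lem:tightComponentsInMinCut} and \autoref{lem:TightCycleandPath}.
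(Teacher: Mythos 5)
Your proof is correct and follows essentially the same route as the paper: the paper also splits on the size $t$ of a cycle decomposition of $S$, applying \autoref{lem:tightComponentsInMinCut} when $t>1$ and \autoref{lem:TightCycleandPath} (using $g\ge 1$ so that $t=1\le g$) when $t=1$. Your remark pinpointing where $g\ge 1$ is used matches the paper's reasoning exactly; the only cosmetic difference is your initial framing about $t=g+1$ versus $t\le g$, which you then correctly replace by the $t\ge 2$ versus $t=1$ split.
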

\begin{proof}
Let $C_1, C_2, \ldots, C_t$ be the cycle decomposition of $S$.  
If $t>1$, by Lemma~\ref{lem:tightComponentsInMinCut}, $C_1$ is tight and $S\backslash C_1$ is a minimum $ab$-separating subgraph in $S\snip C_1$.
Otherwise, if $t = 1$, because $g\geq 1$, Lemma~\ref{lem:TightCycleandPath} implies the existence of a tight cycle $C$ and a tight $C$-to-$C$ path $P$ (in $G\snip C$) that do not cross $S$.  Further, the same lemma ensures that $S$ is a minimum $ab$-separating subgraph in $G\snip (C\cup P)$.
\end{proof}

\subsection{A collection of planar problems}

We recursively use Lemma~\ref{lem:recurse} to construct a set of planar graphs, each annotated with a set of non-separating cycles of the original graph, that collectively contain the minimum $ab$-separating subgraphs for all pair of faces, $a$ and $b$.  Starting with graph $G$ of genus $g$, we create a set of new graphs with genus $g-1$, each obtained by either cutting along a tight cycle $C$ that will belong to the separating subgraphs in the derived graphs (as per Lemma~\ref{lem:recurse}, part~(1)) or by cutting along a tight cycle-path pair that will not cross the separating subgraphs of the derived graphs (as per Lemma~\ref{lem:recurse}, part~(2)).

We use $\ssg(H, a, b)$ and $|\ssg(H, a, b)|$ to refer to the minimum $ab$-separating subgraph in $H$ and its weight, respectively.
Let $\mathcal{C}$ be the set of all tight cycles in $G$.  Note that $|\mathcal{C}| = 2^{2g}$ by the definition of $\mathds{Z}_2$-homology.  Let $\mathcal{CP}$ be the set of all tight cycle-path pairs.  
For each tight cycle $C$ and each homology class $h$, $\mathcal{CP}$ contains one pair $(C,P)$, where $P$ is the shortest path in homology class $h$ that connects copies of $C$ in $G\snip C$.  Therefore, $|\mathcal{CP}| = 2^{2g} \times 2^{2g} = 2^{4g}$.  We find the following lemma helpful for computing $\cal C$ and $\cal CP$.

\begin{lemma}[Erickson and Nayyeri~\cite{EN11}, Theorem 6.3]
\label{lem:EN11}
Let $G$ be an undirected graph with nonnegative
edge weights, cellularly embedded on a surface
of genus $g$ with $b$ boundary components. A minimum-weight
even subgraph of $G$ in every $\mathbb{Z}_2$-homology class can be
computed in $2^{O(g+b)}n\log n$ time.
\end{lemma}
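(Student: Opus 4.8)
The plan is to reduce the problem of computing a minimum-weight even subgraph in \emph{every} $\mathds{Z}_2$-homology class to two pieces: (i)~computing a \emph{tight cycle} in every class --- a minimum-weight cycle with a prescribed signature --- and (ii)~a purely combinatorial post-processing over the set of homology classes. Write $\beta=\dim_{\mathds{Z}_2}H_1(G;\mathds{Z}_2)$ for the first Betti number; since $G$ is cellularly embedded on a surface of genus $g$ with $b$ boundary components, $\beta=O(g+b)$ and there are exactly $2^{\beta}=2^{O(g+b)}$ homology classes. A homology basis of $\beta$ cycles, and hence the signature $\sigma(e)\in\mathds{Z}_2^{\beta}$ of every edge, is computable in $O(n)$ time. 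For a class $c$ let $\tau(c)$ denote the weight of a tight cycle in class $c$ (setting $\tau(0)=0$, say), and fix for each $c$ an even subgraph $T_c$ in class $c$ with $w(T_c)\le\tau(c)$ (e.g.\ the support of a weakly-simple representative of a tight cycle).

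The crux is the following claim: for every class $h$, a minimum-weight even subgraph $Y^{*}$ in class $h$ satisfies
\[
  w(Y^{*}) \;=\; \min\Bigl\{\, \sum_{c\in S}\tau(c) \;:\; S\subseteq\mathds{Z}_2^{\beta},\ \bigoplus_{c\in S}c=h \,\Bigr\},
\]
where $S=\emptyset$ is permitted and corresponds to $h=0$. The inequality ``$\le$'' is easy: for any admissible $S$, the symmetric difference $\bigoplus_{c\in S}T_c$ is an even subgraph in class $\bigoplus_{c\in S}c=h$ of weight at most $\sum_{c\in S}w(T_c)\le\sum_{c\in S}\tau(c)$. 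For ``$\ge$'', take a cycle decomposition $Y^{*}=C_1\sqcup\dots\sqcup C_k$ into edge-disjoint, pairwise non-crossing simple cycles (Lemma~3.2~\cite{CEN09}). Each $C_i$ is a cycle in its class $[C_i]$, so $w(C_i)\ge\tau([C_i])$; conversely, if $w(C_i)>\tau([C_i])$ then replacing $C_i$ in $Y^{*}$ by $T_{[C_i]}$ produces the even subgraph $(Y^{*}\setminus C_i)\oplus T_{[C_i]}$, which is still in class $[Y^{*}]\oplus[C_i]\oplus[C_i]=h$ and has weight at most $w(Y^{*}\setminus C_i)+w(T_{[C_i]})<w(Y^{*}\setminus C_i)+w(C_i)=w(Y^{*})$, contradicting minimality. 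Hence $w(C_i)=\tau([C_i])$ for every $i$, so $w(Y^{*})=\sum_i w(C_i)=\sum_i\tau([C_i])$. Deleting equal pairs from the multiset $\{[C_1],\dots,[C_k]\}$ preserves its $\oplus$-sum and, since $\tau\ge 0$, does not increase the total, leaving a set $S$ with $\bigoplus_{c\in S}c=h$ and $\sum_{c\in S}\tau(c)\le w(Y^{*})$. This proves ``$\ge$'', hence equality. (The swap step here is exactly the one used for \autoref{lem:tightComponentsInMinCut}.)

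Granting the claim, the algorithm and its analysis are short. First, using the $\mathds{Z}_2$-homology-cover machinery of Erickson and Nayyeri~\cite{EN11}, compute a tight cycle, and with it $\tau(c)$ and $T_c$, for every class $c$, in $2^{O(g+b)}n\log n$ time. Next build the Cayley graph of $(\mathds{Z}_2^{\beta},\oplus)$ --- vertex set the $2^{\beta}$ classes, with an arc $u\to u'$ of weight $\tau(u\oplus u')$ for each ordered pair $(u,u')$ --- and run Dijkstra from the zero class. Since every walk in this graph may be shortened, by cancelling repeated generators, to a set of generators with the same $\oplus$-sum (and, conversely, every set of generators with $\oplus$-sum $h$ is realized by such a walk), the computed distance to $h$ equals the right-hand side of the claim, i.e.\ $w(Y^{*})$; this step takes $2^{O(g+b)}$ time. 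Finally, for each class $h$ trace a shortest path $0=u_0\to u_1\to\dots\to u_m=h$ and output $\bigoplus_{j=1}^{m}T_{u_{j-1}\oplus u_j}$: this is an even subgraph in class $\bigoplus_{j}(u_{j-1}\oplus u_j)=h$ of weight at most $\sum_{j}w(T_{u_{j-1}\oplus u_j})\le\sum_j\tau(u_{j-1}\oplus u_j)=\mathrm{dist}(h)=w(Y^{*})$, hence a minimum-weight even subgraph in class $h$. Assembling these $2^{\beta}$ subgraphs costs $2^{O(g+b)}n$ time, so the total is $2^{O(g+b)}n\log n$.

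The only genuinely algorithmic ingredient is the first step: computing a tight cycle in \emph{all} $2^{O(g+b)}$ classes within a single $2^{O(g+b)}n\log n$ pass, rather than the $2^{O(g+b)}n^{2}\log n$ one would incur by running an independent single-source search from each basepoint in the $2^{\beta}$-sheeted $\mathds{Z}_2$-homology cover. That batching is the technical heart of~\cite{EN11}, and I would invoke it as a black box; everything else --- the homology basis, the Cayley-graph post-processing, and the structural claim above --- is elementary.
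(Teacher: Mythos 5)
This statement is not proved in the paper at all: it is Theorem~6.3 of Erickson and Nayyeri, imported verbatim as a citation, so there is no internal argument to compare yours against. Your reconstruction is, in fact, essentially the route the cited paper itself takes --- compute a minimum-weight cycle in every $\mathds{Z}_2$-homology class via the homology cover, then argue (exactly as in your swap step, which mirrors \autoref{lem:tightComponentsInMinCut}) that a minimum-weight even subgraph in any class decomposes into tight cycles, so the per-class optima can be assembled by a cheap combination over the $2^{O(g+b)}$ classes. The combinatorial part of your argument is sound, with one small repair: a tight cycle need not be weakly simple, so rather than taking ``the support of a weakly-simple representative'' you should define $T_c$ as the set of edges traversed an odd number of times by the tight cycle; this is an even subgraph in class $c$ of weight at most $\tau(c)$, which is all your claim and the Cayley-graph assembly need. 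The more substantive caveat is that your proof establishes the lemma only modulo the black box you invoke --- computing tight cycles in \emph{all} classes in a single $2^{O(g+b)}n\log n$ pass through the $\mathds{Z}_2$-homology cover --- and that black box is precisely the technical content of the result being cited; so as a standalone proof it adds a clean reduction and a correct structural claim, but it does not replace the citation, and for the purposes of this paper the citation is all that is used.
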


\begin{lemma}
\label{lem:tightCycleComp}
The sets $\cal C$ and $\cal CP$ can be computed in $2^{O(g)}n \log n$ time.
\end{lemma}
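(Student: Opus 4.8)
The plan is to reduce both computations to $2^{O(g)}$ invocations of \autoref{lem:EN11}, each on a surface of genus at most~$g$ with $O(n)$ vertices and edges, together with $2^{O(g)}$ linear-time surgeries; since the current graph carries at most $O(g)$ boundary components (none for the original input, and never more than $O(g)$ after the cuttings of this section), every such invocation runs in $2^{O(g)}n\log n$ time.

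To compute $\mathcal{C}$, apply \autoref{lem:EN11} once to $G$: this returns a minimum-weight even subgraph in each of the $2^{2g}$ homology classes, and these minimum-weight representatives are exactly the tight cycles of $\mathcal{C}$. (By the exchange argument of \autoref{lem:tightComponentsInMinCut}, a simple cycle occurring in a cycle decomposition of a minimum separating subgraph is always the minimum-weight representative of its homology class, so this is the set the later lemmas require.)

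To compute $\mathcal{CP}$, first compute $\mathcal{C}$, and then process each $C\in\mathcal{C}$ with $[C]\neq 0$ in turn (for $[C]=0$ the representative is the empty subgraph, which is separating, so there is no $C$-to-$C$ path to find). Form $G\snip C$ in $O(n)$ time by surgery on the rotation system; as $C$ is non-separating this surface has genus $g-1$ with two new boundary faces $f_1$ and $f_2$, one per copy of $C$. Cap each $f_i$ with a new apex vertex $u_i$ joined by zero-weight edges to every vertex of the boundary walk of $f_i$, and add a single edge $e=u_1u_2$ across a fresh handle, padding with a constant number of heavy parallel edges if needed to keep the embedding cellular, exactly as in the tight-path construction of the preliminaries. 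The resulting graph $G_C$ has $O(n)$ vertices and edges and genus at most~$g$; apply \autoref{lem:EN11} to it. Each returned minimum-weight even subgraph that uses $e$ becomes, after deleting $e$ and contracting the zero-weight spokes, a minimum-weight path in $G\snip C$ between the two copies of $C$; over the homology classes involved these are precisely the tight $C$-to-$C$ paths, so $(C,P)$ runs over all of $\mathcal{CP}$.

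The running time is dominated by $2^{2g}+1$ calls to \autoref{lem:EN11} plus $O(2^{2g})$ surgeries of cost $O(n)$, hence $2^{2g}\cdot 2^{O(g)}n\log n = 2^{O(g)}n\log n$ in total. The delicate point is the reduction of ``shortest path between two boundary components in a prescribed homology class'' to ``shortest even subgraph in a prescribed homology class'': one must check that capping the boundary faces and attaching the handle is homology-faithful --- that a minimum-weight even subgraph of $G_C$ through $e$ really projects to the intended tight $C$-to-$C$ path, that every pair $(C,h)$ needed in the recursion arises this way (so $|\mathcal{CP}|\le 2^{4g}$), and that the augmentation preserves cellularity so that \autoref{lem:EN11} applies. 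This is the boundary-to-boundary analogue of the tight-path gadget already set up in the preliminaries, so it is routine but worth stating carefully.
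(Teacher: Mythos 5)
Your proposal is correct and takes essentially the same route as the paper: the paper also computes $\cal C$ with one application of \autoref{lem:EN11} to $G$ and then, for each $C\in\cal C$, applies \autoref{lem:EN11} to $G\snip C$ to obtain all tight paths pairing with $C$, for a total of $2^{2g}\cdot 2^{O(g)}n\log n = 2^{O(g)}n\log n$ time. Your apex-vertex and handle gadget merely spells out the boundary-to-boundary tight-path computation that the paper leaves implicit via the tight-path construction sketched in the preliminaries.
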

\begin{proof}
By Lemma~\ref{lem:EN11}, $\cal C$ can be computed in $2^{O(g)}n\log n$ time.  
For each $C\in \cal C$, we apply Lemma~\ref{lem:EN11} to $G\snip C$ to compute all tight paths that pair with $C$ in $\cal CP$ in $2^{O(g)}n \log n$ time.
Therefore, in $2^{2g}\times 2^{O(g)}n\log n = 2^{O(g)}n\log n$ time, we can compute $\cal CP$, too.
\end{proof}

In the following, the positive real numbers are the annotations to the planar graphs that correspond to the set of tight cycles that are cut along that should belong to the separating subgraphs (as per Lemma~\ref{lem:recurse}, part~(1)).
\begin{lemma}
\label{lem:planarCollection}
Let $G$ be a graph embedded in a surface of genus $g$.  There exist a set $\cal H$  of at most $2^{2g^2}$ planar graphs, each annotated with a set of at most $g$ cycles, such that:
\begin{enumerate}
\item [(1)] For any $(H, {\cal C}) \in \cal H$, $F(H)$ is $F(G)$ plus a set of boundary faces.
\item [(2)] For any $(H, {\cal C}) \in \cal H$, for any $a,b\in F(G)$, $|\ssg(H, a,b)| + |{\cal C}| \geq |\ssg(G, a,b)|$.
\item [(3)] For any $a,b\in F(G)$ there exists $(H,{\cal C}) \in \cal H$ such that $\ssg(H, a,b) \cup {\cal C} = \ssg(G, a,b)$.
\end{enumerate} 
Moreover, $\cal H$ can be computed in $2^{O(g^2)}n\log n$ time.
\end{lemma}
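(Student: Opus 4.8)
The plan is to construct $\mathcal H$ recursively, using \autoref{lem:recurse} to peel off one unit of genus per level and branching over all tight cycles and all tight cycle-path pairs so that, for every pair of faces, some branch follows whichever of the two cases of \autoref{lem:recurse} applies. Concretely, I grow a rooted tree of pairs $(G',\mathcal C')$: the root is $(G,\emptyset)$; a node holding $G'$ of genus $g'\ge 1$ (with some boundary faces) gets one child $(G'\snip C, \mathcal C'\cup\{C\})$ for each non-separating tight cycle $C$ of $G'$ and one child $(G'\snip(C\cup P), \mathcal C')$ for each tight cycle-path pair $(C,P)$ of $G'$ whose cycle $C$ is non-separating; a node of genus $0$ is a leaf, and $\mathcal H$ is its set of leaves. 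Cutting along a non-separating tight cycle, or along a tight cycle-path pair whose cycle is non-separating, lowers the genus by exactly one (as in the proof of \autoref{lem:TightCycleandPath}, which also records the effect on the number of boundaries), so every leaf is planar and lies at depth exactly $g$, and hence its annotation is a set of at most $g$ cycles. Computing the tight cycles and tight cycle-path pairs at a node costs $2^{O(g)}n\log n$ time by \autoref{lem:EN11}, exactly as in \autoref{lem:tightCycleComp} (the combinatorial size at most doubles per level, so it stays $2^{O(g)}n$), and the per-node branching factor at genus $g'$ is $2^{O(g')}$; multiplying over the $\le g$ levels bounds $|\mathcal H|$ and the number of tree nodes by $2^{O(g^2)}$ and the total running time by $2^{O(g^2)}n\log n$, and a careful count of how $|\mathcal C|$ and $|\mathcal{CP}|$ grow across levels yields the stated bound $2^{2g^2}$.

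Property~(1) is immediate from the ``Faces and boundaries'' paragraph: every cut preserves $F(G')$ and only appends boundary faces, so inductively $F(H)$ is $F(G)$ plus boundary faces; in particular every $a,b\in F(G)$ remains non-boundary throughout. For property~(3), fix $a,b$ and put $S=\ssg(G,a,b)$; I trace a single root-to-leaf path while maintaining, at the current node $(G',\mathcal C')$, the minimum $ab$-separating subgraph $S'$ of $G'$ together with the invariant $S'\cup\mathcal C'=S$ (identifying edges across the cuts). At a node of genus $\ge 1$, \autoref{lem:recurse} supplies either a tight cycle $C\subsetneq S'$ with $S'\setminus C=\ssg(G'\snip C,a,b)$ — and $C$ is forced to be non-separating, for otherwise $S'\setminus C$ or $C$ itself would be a strictly cheaper $ab$-separating subgraph of $G'$ — so I descend to $(G'\snip C, \mathcal C'\cup\{C\})$ and replace $S'$ by $S'\setminus C$; or a tight cycle-path pair $(C,P)$ with $C$ non-separating that does not cross $S'$ and satisfies $S'=\ssg(G'\snip(C\cup P),a,b)$, so I descend to $(G'\snip(C\cup P), \mathcal C')$ keeping $S'$. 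Both descents land on children that are actually present in the tree, the invariant survives, and the genus strictly decreases, so after $g$ steps I reach a leaf $(H,\mathcal C)$ with $\ssg(H,a,b)\cup\mathcal C=S$.

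Property~(2) needs a short topological statement, proved by induction on the number of cuts used to produce a node $(G',\mathcal C')$ from $G$: for every $ab$-separating subgraph $T$ of $G'$ with $a,b$ non-boundary, the edge set $T\cup\mathcal C'$ is $ab$-separating in $G$ and has weight at most $w(T)+|\mathcal C'|$. If the last cut is a cycle-path pair it adds exactly one boundary face, so by Claim~\ref{clm:sep} $T$ is already $ab$-separating in the parent, with the same annotation and no increase in weight, and the inductive hypothesis closes the case. If the last cut is along a non-separating cycle $C$ it adds two boundary faces and Claim~\ref{clm:sep} does not apply directly; instead I show $T\cup C$ is $ab$-separating in the parent: any dual $a$-to-$b$ walk avoiding $T\cup C$ in the parent uses no edge of $C$, hence lifts verbatim to a dual $a$-to-$b$ walk avoiding $T$ in the child, contradicting that $T$ separates $a$ from $b$ there; the weight of $T\cup C$ in the parent is at most $w(T)+w(C)$, $C$ is exactly the cycle added to the annotation at this step, and the inductive hypothesis applied to $T\cup C$ closes this case too. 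Taking $T=\ssg(H,a,b)$ gives $|\ssg(G,a,b)|\le w(\ssg(H,a,b)\cup\mathcal C)\le|\ssg(H,a,b)|+|\mathcal C|$, which is property~(2).

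The step I expect to be the main obstacle is the counting: confirming that per-level branching over tight cycles and tight cycle-path pairs multiplies out to only $2^{O(g^2)}$ — the delicate part being how the numbers of tight cycles and of cycle-path pairs behave as boundary faces accumulate under repeated cutting — together with the accompanying check that every cycle or pair branched on genuinely reduces the genus by one (equivalently, that its cycle is non-separating), so that all of the $2^{O(g^2)}$ leaves really are planar. Everything else is routine bookkeeping; the only ingredient beyond Claim~\ref{clm:sep} is the ``re-glue a non-separating cycle'' argument used for property~(2).
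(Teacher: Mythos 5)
Your proposal is correct and takes essentially the same route as the paper: an induction/recursion on the genus that branches over all tight cycles and tight cycle-path pairs, using the branch supplied by \autoref{lem:recurse} for property~(3), a Claim~\ref{clm:sep}-style re-gluing argument for property~(2), and the per-level branching count $2^{2g}+2^{4g}$ together with \autoref{lem:tightCycleComp} for the $2^{O(g^2)}$ size and $2^{O(g^2)}n\log n$ time bounds. Your explicit restriction to non-separating cycles and your direct lifting argument in the cycle case (where cutting adds two boundary faces, so Claim~\ref{clm:sep} does not literally apply) merely tighten details the paper leaves implicit, not a different approach.
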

\begin{proof}
We use induction on the genus of $G$.  For $g=0$, ${\cal H} = \{(G,\emptyset)\}$ and the properties of the lemma trivially hold.

Let ${\cal C}_G$ and ${\cal CP}_G$ be the set of tight cycles and tight cycle-path pairs for graph $G$ of genus $g$.  For any $C \in {\cal C}_G$, $G \snip C$ has genus $g-1$; let ${\cal H}_{G \snip C}$ be the set of at most $2^{2(g-1)^2}$ annotated planar graphs guaranteed by the inductive hypothesis for the graph $G \snip C$. For any $(C,P) \in {\cal CP}_G$, $G \snip (C \cup P)$ has genus $g-1$; let ${\cal H}_{G \snip (C\cup P)}$ be the set of at most $2^{2(g-1)^2}$  annotated planar graphs guaranteed by the inductive hypothesis for the graph $G \snip (C\cup P)$.  Let
\[
{\cal H} =\left(\bigcup_{C\in \mathcal{C}_G} \left\{(H,{\cal C} \cup C) \, :\, (H,{\cal C}) \in {\cal H}_{G \snip C} \right\}\right) \cup \left(\bigcup_{(C,P)\in\mathcal{CP}_G}{{\cal H}_{G \snip (C\cup P)}} \right)
\]
Since $|{\cal H}_{G \snip C}| \le 2^{2(g-1)^2}$ and $|{\cal H}_{G \snip (C\cup P)}| \le 2^{2(g-1)^2}$ by the inductive hypothesis, \[|{\cal H}| = 2^{2g}\times 2^{2(g-1)^2} + 2^{4g}\times 2^{2(g-1)^2} \le 2 \times 2^{4g}\times 2^{2(g-1)^2} \le 2^{2g^2}.\]
Let $T(n,g)$ be the running time of our algorithm to compute $\cal H$ (for a graph of genus $g$ with $n$ faces). Lemma~\ref{lem:tightCycleComp} implies that $\cal C$ and $\cal CP$ can be computed in $2^{O(g)}n\log n$ time.  Since $G \snip C$ has two more faces that $G$ and $G \snip (C \cup P)$ than $G$,
\[
T(n,g) \leq 2^{O(g)}T(n+2, g-1) + 2^{O(g)} n \log n = 2^{O(g^2)}n\log n.
\]

We show that $\cal H$ satisfies the remaining properties of the lemma.

\paragraph{Property (1)} For any $C$ (resp.~$(C,P)$) we have $F(G\snip C)\supseteq F(G)$ (resp.~$F(G\snip (C\cup P))\supseteq F(G)$ since cutting along $C$ (resp.~$C\cup P$) only adds boundary faces to $G$.  Thus, this property holds by induction.  

\paragraph{Property (2)} Let $a,b \in F(G)$.  We prove that Property~(2) holds for the annotated graphs derived from $ {\cal H}_{G \snip C}$ and from ${\cal H}_{G \snip (C \cup P)}$ separately.

Consider any $C\in {\cal C}$.  
If $S$ is an $ab$-separating subgraph in $G\snip C$ then $S\cup C$ is $ab$-separating in $G$ by Claim~\ref{clm:sep}.
In particular, the minimum $ab$-separating subgraph in $G$ has length at most $|\ssg(G\snip C, a,b)| + |C|$: $|\ssg(G\snip C, a,b)| + |C| \ge |\ssg(G,a,b)|$.  By induction, we have that for any $(H, {\cal C}) \in {\cal H}_{G \snip C}$, $|\ssg(H,a,b)| + |{\cal C}| \ge |\ssg(G\snip C, a,b)|$.  Therefore, since $(H,{\cal C} \cup C) \in {\cal H}$, and by combining the previous two inequalities gives
$|\ssg(H,a,b)| + |{\cal C} \cup C| = |\ssg(H,a,b)| + |{\cal C}| + |C| \ge |\ssg(G\snip C, a,b)| + |C| \ge |\ssg(G,a,b)|$.

Likewise, for any $(C,P)\in {\cal CP}$, if $S$ is an $ab$-separating subgraph in $G\snip (C\cup P)$ then it must be $ab$-separating in $G$, also by Claim~\ref{clm:sep}.
In particular, the minimum $ab$-separating subgraph in $G$ is not heavier than $S$:  $|\ssg(G\snip (C\cup P), a,b)| \ge |\ssg(G,a,b)|$. By the inductive hypothesis, for any $(H, {\cal C}) \in {\cal H}_{G \snip (C \cup P)}$, we have $\ssg(H, a,b) + |{\cal C}| \geq \ssg(G\snip (C\cup P), a,b)$.  Since $(H,{\cal C}) \in {\cal H}$ and by combining these two inequalities we have $\ssg(H, a,b) + |{\cal C}| \geq|\ssg(G,a,b)|$.

\paragraph{Property (3)} Let $S$ be the minimum $ab$-separating subgraph in $G$.  At least one of the conditions of Lemma~\ref{lem:recurse} must hold.  

If the first condition of Lemma~\ref{lem:recurse} holds, then there is a tight cycle $C\in \mathcal{C}$ such that $S\backslash C$ is a minimum $ab$-separating subgraph in $G\snip C$.  By the induction hypothesis, there is an annotated planar graph $(H,{\cal C})\in {\cal H}_{G \snip C}$ such that $\ssg(H, a, b) \cup {\cal C} = \ssg(G\snip C, a,b) = S\backslash C$.  Since $(H,{\cal C} \cup C) \in  {\cal H}$ and $\ssg(H, a, b) \cup {\cal C} \cup C = \ssg(G\snip C, a,b) \cup C = S = \ssg(G, a,b)$, we achieve the desired property.

If the second condition of Lemma~\ref{lem:recurse} holds, then there is a tight cycle-path pair $(C,P)\in {\cal CP}$ such that $S$ is a minimum $ab$-separating subgraph in $G\snip(C\cup P)$.  By the induction hypothsis, there is an annotated planar graph $(H,{\cal C})\in {\cal H}_{G \snip (C\cup P}$
 such that $\ssg(H, a, b) \cup {\cal C} = \ssg(G\snip (C\cup P), a,b)$.  Since $(H,{\cal C})\in {\cal H}$ and $\ssg(H, a, b) \cup {\cal C} = \ssg(G\snip (C\cup P), a,b)= S= \ssg(G, a,b)$, we achieve the desired property.
\end{proof}

\subsection{The algorithm}

We compute the set of annotated planar graphs $\cal H$ as per Lemma~\ref{lem:planarCollection}, and then for each $(H,{\cal C}) \in {\cal H}$, we solve the all-pairs separating cycle problem in the graph $H$; note that it is only necessary to compute the minimum separating cycle for all pairs of non-boundary faces of $H$.  We represent the set of minimum separating cycles of $H$ using a minimum cut-tree $T_H$ for the dual of the graph $H$.  Since $H$ is planar, we can compute $T_H$ in $O(n \log^3 n)$ time using the planar algorithm (Section~\ref{sec:planar-speed-up}).  We then increase the weight of each edge in $T_H$ by $|{\cal C}|$.  

Let ${\cal T}$ be the collection of all such cut trees. Note that $|{\cal T}| = |{\cal H}| = 2^{O(g^2)}$.  All cut trees in $\cal T$ can be computed in $2^{O(g^2)}n \log^3 n$ time, and can be stored using $2^{O(g^2)}n$ space.  For faces $a$ and $b$ of $G$, $a$ and $b$ are non-boundary faces of $H$ (Lemma~\ref{lem:planarCollection}, Property~(1)).  The minimum $ab$-separating cycle in $H$ corresponds to the minimum $a^*b^*$-cut represented by $T_H$, and can be determined in $O(1)$ time using a Cartesian tree representation of $T_H$.

By Lemma~\ref{lem:planarCollection} (Property~(2) and ~(3) and the weight added to each edge of $T_H$), the weight of the minimum $ab$-separating cycle is given by
\[
\min_{T\in {\cal T}}{w(\text{minimum $a^*b^*$-cut in $T$})}.
\]
Therefore, the minimum $ab$-separating cycle can be determined in $O(|{\cal T}|) = O(|{\cal H}|) = 2^{O(g^2)}$ time.

In \ifFull Section~\ref{sec:merge}, \else the full version of the paper,\fi we show how to merge $k$ cut trees to preserve minimality in time $O(kn \log^2 n)$ time. That is, we show how to, from the set ${\cal T}$, compute a single min-cut tree $T$ for minimum separating subgraphs in $G$ (or minimum cuts in $G^*$).  The total time to compute $T$ by this method is $2^{O(g^2)}(n \log^2 n)$ plus the time to computer ${\cal T}$, or $2^{O(g^2)}(n \log^3 n)$.  This will give:
\begin{theorem}
Let $G$ be a graph embedded on a surface of genus $g$.  The Gomory-Hu tree of $G$ can be computed in $2^{O(g^2)}n \log^3 n$ time.
\end{theorem}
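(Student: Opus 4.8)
The plan is to assemble the pieces developed in this section. As noted in the Preliminaries, the all-pairs minimum cut problem on $G$ is equivalent, via duality, to the all-pairs minimum separating subgraph problem on the dual $G^*$ (identifying a vertex of $G$ with the corresponding face of $G^*$), and $G^*$ also has genus $g$. So it suffices to run the construction of this section with $G^*$ in the role of the input graph, obtaining a cut-equivalent tree on the face set $F(G^*)=V(G)$; that tree is a Gomory--Hu tree of $G$. The rest of the argument is a running-time accounting for that construction.

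First, apply \autoref{lem:planarCollection} to $G^*$: in $2^{O(g^2)}n\log n$ time this produces a family $\cal H$ of at most $2^{2g^2}$ planar graphs $(H,{\cal C})$, each annotated with a set $\cal C$ of at most $g$ cycles, satisfying Properties~(1)--(3). Property~(1) guarantees that every face of $G^*$ remains a non-boundary face of each $H$, so $\ssg(H,a,b)$ is defined for every pair $a,b\in F(G^*)$. (Each $H$ is planar with $2^{O(g)}n$ edges, the blow-up coming from the successive cutting operations; this factor is absorbed into the leading $2^{O(g^2)}$.)

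Second, for each $(H,{\cal C})$, compute a minimum cut-tree $T_H$ of the dual $H^*$ using the near-linear planar algorithm of \autoref{sec:planar-speed-up}, with the $n$ vertices of $G$ (the non-boundary faces of $H$) as terminals, in $O(n\log^3 n)$ time, and then add $|{\cal C}|$ to every edge weight of $T_H$. Since a minimum $ab$-separating cycle of $H$ is exactly a minimum $a^*b^*$-cut of $H^*$, the unshifted $T_H$ encodes every minimum separating cycle of $H$ among the relevant faces; after the uniform shift, the lightest edge on the $a$-to-$b$ path of $T_H$ has weight $|\ssg(H,a,b)|+|{\cal C}|$ and its removal gives the bipartition of $\ssg(H,a,b)\cup{\cal C}$. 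Collecting the shifted trees into $\cal T$, we have $|{\cal T}|=|{\cal H}|=2^{O(g^2)}$, and all of $\cal T$ is built in $2^{O(g^2)}n\log^3 n$ time.

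Finally, combine the trees. By Property~(2), for every $T\in{\cal T}$ and every pair $a,b$ the minimum $ab$-cut in $T$ weighs at least $|\ssg(G^*,a,b)|$; by Property~(3), some $T\in{\cal T}$ attains $|\ssg(G^*,a,b)|$ exactly and realizes the correct bipartition. Hence $\min_{T\in{\cal T}}{w(\text{minimum $ab$-cut in $T$})}$ is the true value for every pair, and the optimizing tree supplies the true cut. Feeding the $k=|{\cal T}|=2^{O(g^2)}$ trees into the tree-merging procedure of \autoref{sec:merge} returns, in $O(kn\log^2 n)=2^{O(g^2)}n\log^2 n$ time, a single minimum cut-tree for the separating-subgraph structure of $G^*$, which by the duality above is a Gomory--Hu tree of $G$; the total is $2^{O(g^2)}(n\log n+n\log^3 n+n\log^2 n)=2^{O(g^2)}n\log^3 n$. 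Everything before the combine step is a direct invocation of results already established, so the one ingredient carrying real content is this final merge (\autoref{sec:merge}): the main work there is showing that merging preserves not merely the minimum-cut \emph{values} but also the vertex \emph{bipartitions}, which is exactly where the non-crossing assumption on minimum cuts from the Preliminaries is used. A secondary, minor point is that each $T_H$ must be produced on the common vertex set $V(G)$ rather than on the larger face set of $H$, which Property~(1) of \autoref{lem:planarCollection} makes unambiguous; the remaining $2^{O(g^2)}$ count of planar instances and the logarithmic-factor bookkeeping are routine.
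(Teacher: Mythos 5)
Your proposal is correct and follows essentially the same route as the paper: apply \autoref{lem:planarCollection}, build the $2^{O(g^2)}$ planar cut trees with the $O(n\log^3 n)$ algorithm of \autoref{sec:planar-speed-up}, shift each tree's edge weights by $|{\cal C}|$, and combine via the merging procedure of \autoref{sec:merge}, with the same time accounting. The only cosmetic difference is that you phrase the construction as running on $G^*$ and dualizing back, whereas the paper runs it on $G$ and reads off cuts of $G^*$; by the duality stated in the Preliminaries these are the same argument up to relabeling.
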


\ifFull
\section{Merging cut trees}
\label{sec:merge}

In this section we show how to merge several cut trees into one tree that preserves minimum cuts.  Formally, given $k$ cut-trees over a common vertex set $V$, ${\cal T} = \{T_1, T_2, \ldots, T_k\}$, into a new cut-tree $T_0$ such that for any $x,y \in V$, the minimum $xy$-cut in $T_0$ is the minimum over $i = 1, \ldots, k$ of the minimum $xy$-cut in $T_i$.  We require that for any $a,b,x,y \in V$, if the minimum $ab$-cut is in $T_i$ and the minimum $xy$-cut is in $T_j$, these cuts do not cross.  We will refer to the cuts in $\cal T$ simply as cuts and we will refer to an $xy$-cut that is a minimum $xy$-cut for some $T_i$ as a minimum cut.

We will represent each of the cut trees to be merged as well as the partial cut tree constructed from the merger by region trees. Using a dynamic tree  data structure~\cite{SleTar-JCSS-83} to implement $R$ gives us $O(\log n)$ time cut and link operations.  By annotating the internal edges of $R$ with the cost of the corresponding cut, we can use the minimum operation to (dynamically) find the minimum cut between any two vertices in $R$ in $O(\log n)$ time as well using a path operation.

\bigskip

We simulate the Gomory--Hu algorithm, using the standard divide and conquer approach, starting with $R_0$ being the star tree whose leaves correspond to the elements of the vertex set $V$.  For $i = 1, \ldots, k$, $R_i$ is the region tree corresponding to $T_i$.  Our assumption that the minimum cuts represented by $\cal T$ do not cross guarantees that any cut we add to $R_0$ will nest with the cuts already represented by $R_0$.  Every time we add a cut to $R_0$ we will split $R_i$ for every $i = 1,\ldots, k$ recurse on the respective halves.  Although the {\em minimum} cuts represented by $R_i$ do not cross the {\em minimum} cuts represented by $R_j$, a minimum cut represented by $R_i$ may cross (non-minimum) cuts represented by $R_j$, these cuts will be lost in the representation, but since any cut that crosses a minimum cut that we have found cannot be a minimum cut, by our non-crossing assumption, our merge procedure will preserve minimum cuts.

\subsection{Finding and representing non-crossing cuts} \label{sec:find}

Given a bipartition $(A,B)$ of the leaves of a region tree $R$, we show how to efficiently find the regions represented by $R$ that nest inside $A$ and that nest inside $B$ in time proportional to the size of the smaller side of $(A,B)$ plus the number of regions of $R$ that cross $(A,B)$.

\begin{lemma}\label{lem:A}
  Let $(A,B)$ be a bipartition of the leaves of a region tree $R$.  Let ${\cal T}_A$ be the set of maximal, dangling subtrees of $R$ whose leaves are all in $A$.  The internal nodes of ${\cal T}_A$ represent all the regions of $R$ that nest inside $A$.
\end{lemma}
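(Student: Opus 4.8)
\noindent The plan is to first pin down the two notions in play and then prove two inclusions. By a \emph{region} of $R$ I will mean one of the two leaf-sides of the bipartition obtained by deleting an internal edge of $R$ --- equivalently, a set $S_v$ once $R$ is rooted somewhere. I will call a subtree $D$ of $R$ \emph{dangling} when a single edge $e_D$ joins $D$ to $R\setminus D$; its endpoint $a_D\in D$ is the \emph{attachment node}, and I will always root $D$ at $a_D$. After this rooting every internal node $v$ of $D$ names the leaf-set of the portion of $D$ hanging from $v$ (for $v=a_D$ this is the whole leaf-set of $D$), and, as in the preliminaries, each such leaf-set is a region of $R$ (the attachment node being the only one needing a moment's care, see below). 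With these conventions ${\cal T}_A$ is the set of maximal dangling subtrees all of whose leaves lie in $A$, and the lemma asserts that every region of $R$ contained in $A$ is named by some node of ${\cal T}_A$; I will in fact prove this together with the (easier) reverse inclusion, so that the two families of regions coincide.

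The easy direction I would dispose of first: for $D\in{\cal T}_A$ and an internal node $v$ of $D$, the region named by $v$ is contained in the leaf-set of $D$, which by the definition of ${\cal T}_A$ is contained in $A$; so every region named in ${\cal T}_A$ nests inside $A$. For the direction the lemma asks for, I would start from a region $S\subseteq A$, written as the leaf-set of a component $C$ of $R-e$ for an internal edge $e$. The first step is to note that $C$ is itself a dangling subtree (its only edge to the complement is $e$) whose leaves all lie in $A$; since the family of such subtrees is finite and contains $C$, some maximal member $D$ contains $C$, and $D$ is then maximal outright --- a larger all-$A$ dangling subtree would still contain $C$ --- so $D\in{\cal T}_A$. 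The second step is to identify the node of $D$ naming $S$ as the endpoint $p$ of $e$ that lies in $C$ (an internal node of $R$, hence of $D$). If $C=D$ this is immediate, since then $p=a_D$ and the leaf-set of $D$ is $S$. If $C\subsetneq D$, I would argue that $a_D\notin C$: otherwise the edge $e_D$, which leaves $D\supseteq C$ at $a_D\in C$, would be a second edge joining $C$ to $R\setminus C$ besides $e$, contradicting that $C$ is dangling via $e$ alone; then, with $D$ rooted at $a_D$, the edge $e$ is the parent edge of $p$ and $C$ is exactly the part of $D$ hanging from $p$, so $p$ names $S$.

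The hard part, such as it is, is this last step --- in particular the small ``two-edge'' argument that forces $a_D$ to lie outside $C$, since that is precisely what singles out the naming node. Everything else is bookkeeping: that a component of $R$ minus an internal edge is a dangling subtree, that a maximal all-$A$ dangling supertree of $C$ exists and is maximal without qualification, and the mild degenerate situations (a leaf-edge-attached subtree, or a side $B$ of size one) in which the attachment node's region calls for a separate but routine check.
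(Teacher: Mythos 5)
Your proposal is correct and follows essentially the same route as the paper's proof: both directions are the same two inclusions, with the key step being that the all-$A$ subtree determined by a region $S\subseteq A$ (your component $C$ of $R-e$, the paper's rooted subtree $T_v$) extends by maximality to a member of ${\cal T}_A$. The only difference is bookkeeping: the paper phrases everything via the global rooting of $R$ (so the representing node is immediate), while you root each dangling subtree at its attachment node and add the short $a_D\notin C$ argument to locate the node, plus a flag for the degenerate attachment-node cases that the paper likewise glosses over.
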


\begin{proof}
  Let $v$ be an internal node of $R$ such that the region $S_v \subseteq A$; then the subtree $T_v$ of $R$ rooted at $v$ is a subtree of $R$ whose leaves are all in $A$, so $T_v$ must be a subtree of a tree in ${\cal T}_A$.  Likewise, let $u$ be an internal node of $R$ such that the region $S_u \not\subseteq A$; then the subtree $T_u$ of $R$ rooted at $u$ is a subtree of $R$ whose leaves are not all in $A$, so $T_u$ must is not a subtree of a tree in ${\cal T}_A$.
\end{proof}

We denote by $R[A]$ the region tree created by linking the root-most (in $R$) nodes of ${\cal T}_A$ to a new node and adding a new child node $\beta$ of this node (representing $B$).  $R[A]$ is a region tree for the graph obtained by identifying all the vertices of $B$ (and calling this new vertex $\beta$) and it represents all the regions of $R$ that nest inside $A$.  The root-most vertices of ${\cal T}_A$ can be found in $O(|A|)$ time by a simple search starting from $A$.  Since $|{\cal T}_A| = O(|A|)$, $O(|A|)$ link and cut operations can be used to build $R[A]$.  This gives us:

\begin{corollary} \label{cor:A} 
  $R[A]$ can be found in $O(|A|\log n)$ time.
\end{corollary}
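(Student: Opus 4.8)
The plan is to derive the bound directly from the construction of $R[A]$ described just above the statement, by separately costing its two phases: (i) locating the root-most nodes of the dangling subtrees in ${\cal T}_A$, and (ii) performing the link and cut operations that detach these subtrees from $R$, hang their roots under a fresh node, and attach the new leaf $\beta$. Step (ii) is the one that carries the logarithmic factor.

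First I would recall that $R$ is maintained with a Sleator--Tarjan dynamic-tree structure, so each link and each cut costs $O(\log n)$. The construction uses $O(|A|)$ such operations (one cut per tree of ${\cal T}_A$ to detach it from $R$, one link per tree to the new node, and one link for $\beta$), since $|{\cal T}_A| = O(|A|)$; this already accounts for $O(|A|\log n)$ time. Adding the cost of phase (i) --- which is $O(|A|)$ by the search described above, and in any case at most $O(|A|\log n)$ even if carried out via dynamic-tree navigation --- leaves the total at $O(|A|\log n)$, proving the corollary.

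The only claim worth spelling out is the bound $|{\cal T}_A| = O(|A|)$ together with the $O(|A|)$ search time, and here the argument is a charging argument: the trees of ${\cal T}_A$ are leaf-disjoint and each contains at least one leaf of $A$, so there are at most $|A|$ of them; and a bottom-up sweep starting from the leaves of $A$, using subtree leaf-counts precomputed once in $O(n)$ time to decide when a node's subtree is wholly contained in $A$, visits a node only if a leaf of $A$ lies below it and stops at the first ancestor of such a node whose subtree also contains a leaf outside $A$. Each visited node can then be charged to a leaf of $A$ in its subtree (with the boundary ancestors charged once each), bounding the visited set by $O(|A|)$. The only obstacle I anticipate is purely bookkeeping --- keeping the search from straying into the part of $R$ outside these $O(|A|)$ subtrees --- and the leaf-count test resolves it; since the final running time is insensitive, up to the $\log n$ factor, to whether the search is $O(|A|)$ or $O(|A|\log n)$, no further care is needed.
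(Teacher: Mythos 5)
Your proof is correct and follows essentially the same route as the paper: find the root-most nodes of ${\cal T}_A$ by a search launched from the leaves in $A$ (the paper just calls this a ``simple search'' taking $O(|A|)$ time), then observe $|{\cal T}_A| = O(|A|)$ so that $O(|A|)$ link/cut operations on the dynamic-tree representation, at $O(\log n)$ each, assemble $R[A]$. The one caveat is your $O(n)$ leaf-count precomputation: since the corollary is invoked repeatedly on trees that are being cut and relinked during the merge, those counts must be maintained within the dynamic-tree structure (or replaced by per-node child counters filled in during the bottom-up sweep) rather than recomputed per call, as a fresh $O(n)$ pass per invocation would exceed the stated bound; with that adjustment the argument is exactly the paper's.
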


Likewise, we define $R[B]$ symmetrically, built from the set of maximal subtrees ${\cal T}_B$.  Consider the subset $C$ of edges in $R$ that are not in ${\cal T}_A \cup {\cal T}_B$ nor are parent edges of the roots of the trees in ${\cal T}_A \cup {\cal T}_B$.  By definition of ${\cal T}_A$ and ${\cal T}_B$, every edge in $C$ must correspond to a cut of $R$ that crosses $(A,B)$.  $C$ forms a connected subtree of $R$, for the path from any node of $C$ to the root of $R$ must stay within $C$.  We can find $C$ efficiently since $C$ is the minimal subtree of $R$ that spans the parents of the roots in ${\cal T}_A$ for, consider a path $P$ in $R$ connecting two parents of roots in ${\cal T}_A$, no edge in $P$ can belong to a dangling subtree (of, for example, ${\cal T}_B$) nor be the parent edge of such a tree.  Therefore, using a simple search through $R$, we can find $C$.  Using cut and link operations, we can emulate edge contraction in $R$, giving us:
\begin{corollary} \label{cor:B}
  $R[B]$ can be found in $O(|C|\log n)$ time where $|C|$ is the number of cuts represented by $R$ that crosses $(A,B)$.
\end{corollary}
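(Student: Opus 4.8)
The plan is to produce $R[B]$ by an in-place contraction of $R$ that reuses the search already performed for $R[A]$, rather than by rebuilding it bottom-up from ${\cal T}_B$ --- the latter could cost $\Theta(|B|)$, since there may be that many regions nesting inside $B$. Recall that computing $R[A]$ via \autoref{cor:A} spends $O(|A|\log n)$ time and, along the way, identifies the root of every dangling subtree in ${\cal T}_A$ and detaches these subtrees from $R$; what is left is precisely the subtree $C$ of crossing-cut edges with the dangling subtrees of ${\cal T}_B$ attached. The characterization I would use is that the parent edge of a non-root node $v$ of $R$ lies in $C$ exactly when $S_v$ meets both $A$ and $B$ (if $S_v\subseteq A$ or $S_v\subseteq B$ then $v$ lies in, or is the root of, a subtree of ${\cal T}_A$ or ${\cal T}_B$), and that these nodes are precisely the non-root nodes on a path in $R$ from the root to the parent of a root of a ${\cal T}_A$-subtree: such a parent has a descendant ${\cal T}_A$-root, so its region meets $A$, and it is not contained in $A$, so its region meets $B$ --- and both properties pass to all ancestors; conversely a node whose region meets $A$ has a ${\cal T}_A$-root below it. Hence $C$ is connected, contains the root, and can be traced out by walking rootward from each parent of a ${\cal T}_A$-root, halting each walk at the first previously visited node (or the root); every node touched is in $C$, so this costs $O(|C|)$ pointer operations (the $O(|A|)$ merely to initiate the walks is already charged to \autoref{cor:A}).

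Given $C$, I would form $R[B]$ by using cut and link operations to contract the $|C|$ internal edges of $C$, collapsing it to a single node $\rho$ that becomes the root of $R[B]$. Since the parent of every root of a ${\cal T}_B$-subtree lies in $C$, each such subtree now hangs directly from $\rho$, and its parent-edge weight is left unchanged --- correctly so, because the corresponding cut of $R$ puts all of $A$ on one side and hence keeps its weight when the vertices of $A$ are identified. Finally I attach a single leaf $\alpha$ to $\rho$ representing the identified set $A$, and merge into $\alpha$ the (at most $|C|$) leaves brought to $\rho$ by the contraction that belong to vertices of $A$; the leaves brought to $\rho$ that belong to vertices of $B$ stay as separate leaves. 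The resulting (partial) region tree has a leaf for every vertex of $B$ and for $\alpha$, and by \autoref{lem:A} with the roles of $A$ and $B$ exchanged its non-root internal nodes are exactly the regions of $R$ nesting inside $B$, so it is a valid $R[B]$. It is built with $O(|C|)$ cut/link operations, each $O(\log n)$ in the dynamic-tree representation, for a total of $O(|C|\log n)$.

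The step I expect to be the main obstacle is the accounting. A single node of $C$ may be the parent of arbitrarily many roots of ${\cal T}_B$, and there may be $\Theta(|B|)$ regions nesting inside $B$ altogether, so it is essential never to enumerate ${\cal T}_B$ or its regions: the point of phrasing the construction as a contraction of $C$ is that it relinks all of those subtrees to $\rho$ for free. Conversely, one must be careful that the $O(|A|)$-sized work of finding and detaching the subtrees of ${\cal T}_A$ is attributed to \autoref{cor:A} --- with $A$ chosen as the smaller side of the bipartition --- and is not recharged here, so that the cost assigned to this corollary is only the $O(|C|)$ pointer operations tracing $C$ and the $O(|C|)$ contractions collapsing it.
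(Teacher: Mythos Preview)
Your proposal is correct and follows essentially the same approach as the paper. The paper's argument is terse: it observes that $C$ is connected, that it is the (minimal) subtree of $R$ spanning the parents of the roots in ${\cal T}_A$, that a simple search finds it, and that cut/link operations emulate the contraction; your write-up simply unpacks each of these steps (characterizing $C$ via $S_v$ meeting both $A$ and $B$, tracing it by rootward walks, contracting, then attaching $\alpha$) and makes the accounting explicit---in particular the point that the $O(|A|)$ work to locate the ${\cal T}_A$-root parents is charged to \autoref{cor:A}, which the paper leaves implicit. If anything, your description ``walk rootward from each parent of a ${\cal T}_A$-root'' is slightly more accurate than the paper's phrase ``minimal subtree spanning the parents of the roots in ${\cal T}_A$'', since $C$ always extends up to the root of $R$ even when all those parents lie below a single proper descendant.
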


\subsection{Algorithm for merging trees}  

Given the set of complete region trees ${\cal R} = \{R_i\ : \ \forall T_i \in {\cal T}\}$, we initialize $R_0$ to be the star region tree over vertex set $V$.  Let $r$ initially be the root of $R_0$ and let $S = V$.  The input to the recursive method is $r$, and the set of region trees ${\cal R}$.  By construction, we will guarantee that every tree in $\cal R$ has the same leaf set and that these are the vertices that are neighbors of $r$ in $R_0$.

\paragraph{Picking a cut} Let $a$ and $b$ be two children of $r$ in $R_0$ that are also in $V$.  Let $A,B$ be the best of the minimum $ab$-cuts of ${\cal R}$.  Without loss of generality, we assume this cut is witnessed by $R_1$ and that $A$ is the smaller side of the cut.

\paragraph{Updating $\mathbf R_0$}  Cut the elements of $A$ from node $r$ of $R_0$ and make them children of a new node $\alpha$.  Make $\alpha$ a child of $r$.  Relabel $r$ as $\beta$.

\paragraph{Partitioning $\mathbf R_1$} Let $uv$ be the edge of $R_1$ that witnesses the cut $A,B$ where $u$ is the child of $v$ in $R_1$.  Cut $u$ from $v$, creating two trees, and make $u$ the root of the new tree.  Add a child $\alpha$ and $\beta$ to each of $u$ and $v$, representing $A$ and $B$.  These are the two trees $R_1[A]$ and $R_1[B]$ such that the leaves of $R_1[A]$ are $A \cup \{\beta\}$ and the leaves of $R_1[B]$ are $B \cup \{\alpha\}$.

\paragraph{Partitioning $\mathbf R_i$ for $\mathbf i > 1$} Find $R_i[A]$ and $R_i[B]$ as described in \autoref{sec:find}.

\paragraph{Recursion} We recurse on two subproblems, corresponding to $A$ and $B$:
\begin{itemize} [noitemsep,nolistsep]
\item the problem defined by $\alpha$ (as a node of $R_0$), $\{R_i[A]\ : \ i = 1,\ldots,k\}$, and
\item the problem defined by $\beta$ (as a node of $R_0$), $\{R_i[B]\ : \ i = 1,\ldots,k\}$.
\end{itemize}
Note that the neighborhood of $\alpha$ in $R_0$ contains the nodes $A \cup \{\alpha\}$, which are the leaves of $R_i[A]$ for $i = 1,\ldots,k$.  Likewise, the neighborhood of $\beta$ in $R_0$ contains the nodes $B \cup \{\beta\}$, which are the leaves of $R_i[B]$ for $i = 1,\ldots,k$.

\bigskip

\noindent Correctness of this algorithm follows from \autoref{lem:A}: $R_i[A]$ and $R_i[B]$ represent all the cuts represented by $R_i$ that do not cross $A,B$.  Since $A,B$ is a minimum cut, and we assume that minimum  cuts do not cross, the cuts represented by $R_i$ that are represented by neither $R_i[A]$ nor $R_i[B]$ cannot be minimum.

\paragraph{Running time}
In building $R_i[B]$, edges are contracted, but these edges do not appear as edges of $R_2[A]$ or $R_2[B]$ since they correspond to crossing cuts, so the total time spent contracting edges is $O(n \log n)$, using $O(n)$ dynamic tree cut and link operations to implement contractions.
The remaining time for one subproblem takes $O(\min\{|A|,|B|\})$ dynamic tree operations.  Therefore, the total time spent is $T(n) = T(a+1) + T(b+1) + O(k\min\{a,b\}\log n)$ where $a + b = n$.  Solving this recurrence gives $T(n) = O(kn \log^2 n)$.
\fi

\section{Speed-up for planar graphs} \label{sec:planar-speed-up}

Borradaile, Sankowski and Wulff-Nilsen~\cite{BSW10,BSW14} gave an $O(n \log^4 n)$-time {\em cycle-based} algorithm for computing a Gomory--Hu cut tree of a planar graph $G = (V,E)$, assuming that minimum cuts are unique and so any two minimum cuts are guaranteed to nest.  The algorithm is guided by a {\em recursive decomposition} of the graph by small, balanced separators.  Working from leaf-to-root in this recursive decomposition, the algorithm finds all the minimum cuts between unseparated vertices in a {\em piece} of the decomposition by finding the corresponding minimum separating cycles in the dual graph.  The cycles are found by computing shortest paths explicitly within the piece and implicitly outside the piece by relying on precomputed distances outside the piece between all pairs of boundary vertices of the piece (represented by an {\em external dense distance graph}.  Computing one cycle incurs a $\log^3 n$ factor in the runtime; combined with the logarithmic depth of the recursive decomposition results in a $\log^4 n$ factor in the running time.  We overcome this bottleneck by instead computing the maximum flow between each pair of unseparated vertices and then extracting the minimum cut from this flow.   
Similar ideas have been used for flow problems, but not for cut problems~\cite{BKMNW11,LNSW12}.  

In both the cycle- and flow-based algorithms, a partial region tree is updated with each newly found cut.  The running time due to this update, in the original cycle-based algorithm, also met $\log^4 n$-factor bottleneck.  We improve this by using a slightly modified version of the region-tree update step of the cycle-based algorithm.  As in the original, we will assume, without loss of generality, that $G$ is triangulated and has bounded degree.

\subsection{Recursive Decomposition}\label{sec:rd} A \emph{piece} $P$ is a subset of $E$ that we regard as a subgraph of $G$ that inherits its embedding from $G$.  A \emph{boundary vertex} of $P$ is a vertex of $P$ incident in $G$ to a vertex not in $P$ and we let $\delta P$ denote the set of boundary vertices of $P$. A \emph{hole} $H$ of $P$ is a face of $P$ which is not a face of $G$. We sometimes regard $H$ as the subgraph of $G$ contained in $H$. Define the \emph{boundary} of $H$ as $\delta H = \delta P\cap H$.

A \emph{decomposition} of $P$ is a set of sub-pieces of $P$ such that every edge of $P$ belongs to a unique subpiece, except that edges with both endpoints in $\delta P$ may belong to more than one subpiece. A \emph{recursive decomposition} of $G$ is obtained by first finding a decomposition of $G$ and then recursing on each sub-piece until pieces of constant size are obtained. Ancestor/descendant relations between resulting set of pieces $\mathcal P$ are defined by their relations in the recursion tree.  The specific type of recursive decomposition we use can be computed in $O(n \log n)$ time and has the following properties~(Section 6, \cite{BSW14}):
\begin{itemize} [noitemsep,nolistsep]
\item each piece of $\mathcal P$ is connected and has a constant number of holes and a constant number of child pieces
\item $\sum_{P\in\mathcal P} |E(P)| = O(n\log n)$
\item $\sum_{P\in\mathcal P} |\delta P|^2 = O(n\log n)$
\end{itemize}
To simplify the analysis, we shall further assume that each non-leaf piece has exactly two children; generalizing to a constant number of children is straightforward.

\subsection{Dense dual-distance graphs}  \label{sec:ddg}
A dense distance graph is a weighted, complete graph on a subset of vertices of the original graph where the weight of an edge equals the shortest path distance in the original graph.  

In the cycle-based algorithm, computations of recursive decompositions, distances, etc. stayed completely in the dual graph $G^*$.  In our flow-based algorithm, we instead use a recursive decomposition of $G$, and compute flows and cuts in $G$; however, we rely on distances precomputed in $G^*$.  For a piece $P$, rather than computing distances in $G$ between vertices of $\partial P$, we compute distances in $G^*$ between vertices of $G^*$ that correspond to faces of $G$ that are incident to $\partial P$.  Let $F_I^*(P)$ (resp.\ $F_E^*(P)$) denote the vertices of $G^*$ that correspond to faces of $G$ that are incident to $\partial P$ and in $P$ (resp.\ not in $P$).  Since $G$ is triangulated and has bounded degree,  $\sum_{P\in\mathcal P} |F_I^*(P)|^2 = O(n\log n)$ and $\sum_{P\in\mathcal P} |F_E^*(P)|^2 = O(n\log n)$.

The external dense dual-distance graph $\DDG(P)$ for a piece $P$ is the dense distance graph for the vertex set 
$F_E^*(P)$ representing distances in the subgraph of $G^*$ induced by edges that are not in $P$.  Some external distances may not be finite since the complement of $P$ is not necessarily connected; we can represent $\DDG(P)$ instead as a union of dense distance graphs, one corresponding to each component of the complement of $P$.  The set of all external dense dual-distance graphs, $\{\DDG(P)\ : \ P \in {\cal P}\}$, can be computed in $O(n \log^3 n)$ time using minor modifications to an algorithm by \L acki, Nussbaum, Sankowski and Wulff-Nilsen~\cite{LNSW12}; see
\ifFull
the next section
\else
the full version of this paper
\fi
for details.

\ifFull

\newpage
\subsection{Computing dense dual distance graphs}
\label{sec:compute-ddgp}

In this section, we give the details for computing the set of all external dense dual-distance graphs, $\{\DDG(P)\ : \ P \in {\cal P}\}$ in $O(n \log^3 n)$ time.  We will require an auxiliary dense distance graph: the internal dense dual-distance graph $\IDDG(P)$ for a piece $P$ is the dense distance graph for the vertex set 
$F_I^*(P)$ representing distances in the subgraph of $G^*$ induced by edges of $P$ (which is not $P^*$).  $\IDDG(P)$ can be computed in $O(|P| \log |P|)$ time using the multiple-source shortest path algorithm due to Klein~\cite{Klein05}.

Consider the cycles defining the boundaries of holes over all pieces of the recursive decomposition. Using the recursive decomposition in~\cite{LNSW12}, these cycles nest and hence form a laminar family which can be represented by a forest $\mathcal F$ of rooted trees. Refer to holes that are external faces of pieces as \emph{external} and all other holes as \emph{internal}. We first describe how to find dense dual distance graphs for internal holes. Consider one such hole $H$ with a boundary $C$ corresponding to a leaf in $\mathcal F$. Then $\DDG(H)$ is equal to $\IDDG(P)$ for a piece with external face bounded by $C$. The dense dual distance graphs for the remaining internal holes are now found bottom-up in $\mathcal F$. To see how this can be done, let $H$ be one of these holes with a boundary $C$. There is a piece $P$ whose external face is bounded by $C$ and $\DDG(H)$ can be obtained using a fast Dijkstra variant of Fakcharoenphol and Rao~\cite{FR06} on the union of $\IDDG(P)$ and the already computed $\DDG(H')$ for each internal hole $H'$ of $P$. This takes a total of $O(n\log^3n)$ time.

It remains to compute $\DDG(H)$ for each external hole $H$. We start with those bounded by cycles corresponding to roots in $\mathcal F$. The algorithm of Fakcharoenphol and Rao is applied to each such hole in $O(n\log^2n)$ time per hole. We claim that this takes a total of $O(n\log^3n)$ time which will follow if we can bound the number of roots by $O(\log n)$. Observe that each step in the construction of the recursive decomposition consists of splitting a piece in two by a cycle separator and recursing on each side. Traversing the branch of the recursion corresponding to pieces on the outside of each cycle separator, we encounter all cycles corresponding to the roots of $\mathcal F$, giving the desired $O(\log n)$ bound. For the remaining external holes, we compute their dense dual distance graphs top-down in $\mathcal F$. To see how, suppose one of these holes $H$ is about to be processed and let $C$ be its boundary. Let $P$ be a piece with an internal hole $H_C$ with boundary $C$ and let cycle $C'$ bound the external face of $P$. For each hole $H'\neq H_C$ of $P$, $\DDG(H')$ has already been computed if $H'$ is internal. This is also the case if $H'$ is external since $C'$ is an ancestor of $C$ in $\mathcal F$. Furthermore, $\DDG(H)$ can be obtained from the union of $\IDDG(P)$ and $\DDG(H')$ over all such holes $H'$. Again, using the fast Dijkstra variant of Fakcharoenphol and Rao, computing all external dense distance graphs takes $O(n\log^3n)$ time.
\fi

\subsection{Region subpieces}\label{sec:rs}

As in the cycle-based algorithm, our flow-based algorithm {\em processes} pieces of the recursive decomposition in a leaf-to-root order.  Processing a piece $P$ involves separating every pair of unseparated vertices in $P$.  We maintain a region tree as described in the preliminaries.  For a pair of vertices $s$ and $t$ in $P$ that are not yet separated, there is a corresponding region $R$ in the region tree that contains $s$ and $t$.  We focus our attention on a {\em region subpiece} which is $P \cap R$.  Borradaile, Sankowski and Wulff-Nilsen argue that for a leaf-most unprocessed piece $P$ with child pieces $P_1$ and $P_2$, a region subpiece contains at most one pair of unseparated vertices, the number of region subpieces corresponding to $P$ is $O(|\partial P_1 \cup \partial P_2|)$, and that all the region subpieces corresponding to $P$ can be computed in time $O((|P|+ |\partial P_1 \cup \partial P_2|^2)\log^2 n)$, have total size $O(|P|)$ and inherit a total of $O(|\partial P|)$ boundary vertices from $P$.

Given these bounds, in the sequel, we focus on a single region subpiece $R_P$ with unseparated vertices $s$ and $t$.  The boundary vertices $\partial R_P$ of $R_P$ are inherited from $P$.

\subsection{Separating $s$ and $t$} \label{sec:st}

Separating $s$ and $t$ is done by first computing a maximum $st$-flow in $G$ which is explicitly represented on $E(R_P)$ and implicitly represented on $E(G) \setminus E(R_P)$. Given $\DDG(R_P)$, the running time of the algorithm is $O((|R_P| + |\partial R_P|^2)\log^2n)$ which by the properties of the recursive decomposition and the bounds on the region subpieces is $O(n\log^3n)$ over all pieces of the recursive decomposition.  The algorithm is nearly identical to a part of the single-source, all-sinks maximum flows algorithm due to \L acki et~al. (Section III C~\cite{LNSW12}) wherein they compute the flow between two cycles rather than two vertices; since $s$ and $t$ can be regarded as degenerate cycles, we can use the same algorithm.  The main difference is that, in order to update the region tree, we must identify the cut edges corresponding to the maximum $st$-flow which is represented largely implicitly.

In order to explain how we determine the cut edges, it suffices to explain how flows are represented implicitly by \L acki et~al., rather than explain their entire algorithm which we use as a black box.  The flow is given by an explicit flow $f_P$ on each edge of $P$ and a circulation $f_C$ defined in the entire graph. The latter is given by a potential function $\phi$ on the set of faces of $G$ that is updated during the algorithm. The circulation $f_C$ is defined by $f_C(uv) = \phi(f_2) - \phi(f_1)$, where $f_1f_2$ is the (directed) dual edge corresponding to $uv$.  It turns out that that it suffices to maintain $\phi(f)$ for faces $f$ incident to $R_P$; this compact representation has been used in the recursive planar flow algorithms by \L acki et~al.~\cite{LNSW12} and Borradaile et al.~\cite{BKMNW11}.  To see why, consider an edge $fg$ of $\DDG(R_P)$; recall that $f$ and $g$ are faces incident to $\partial R_P$ not in $R_P$.  Edge $fg$ corresponds to a shortest path $Q_{fg}$ in $G^*[E(G) \setminus E(P)]$.  The total flow crossing $Q_{fg}$ is given by the sum of the flow $\phi(u) - \phi(v)$ on each edge $uv$ of $Q$; the sum is telescoping and the total flow crossing $Q_{fg}$ is $\phi(g)-\phi(f)$.  $Q_{fg}$ is saturated by the flow if $\phi(g)-\phi(f)$ is equal to the weight of $fg$ in $\DDG(R_P)$.  Since $\phi$ and $\DDG(R_P)$ are maintained, we can find all such {\em saturated} edges of $\DDG(R_P)$ in time $O(|\partial R_P|^2)$.  Let $\DDG_0(R_P)$ be the subgraph of $\DDG(R_P)$ of saturated edges.

Consider a hole of $P$ with boundary $C$; take the hole to not be the infinite face and order the vertices of $C$ cyclically in a clockwise order.  For two vertices $a$ and $b$ of $R_P \cap C$, there is a residual $a$-to-$b$ path in $G \setminus R_P$ only if there is no edge of $\DDG_0(R_P)$ from a face $f$ incident to a part of $C$ from $a$ to $b$ to a face $g$ incident to a part of $C$ from $b$ to $a$;  the path in $G^*$ corresponding to $fg$ consists of edges that are saturated from the $a$ side on the hole to the $b$ side of the hole. 

\newcommand{\holefigure}{%
\begin{figure}[h]
  \vspace{-6ex}
  \centering \label{fig:hole}
  \includegraphics{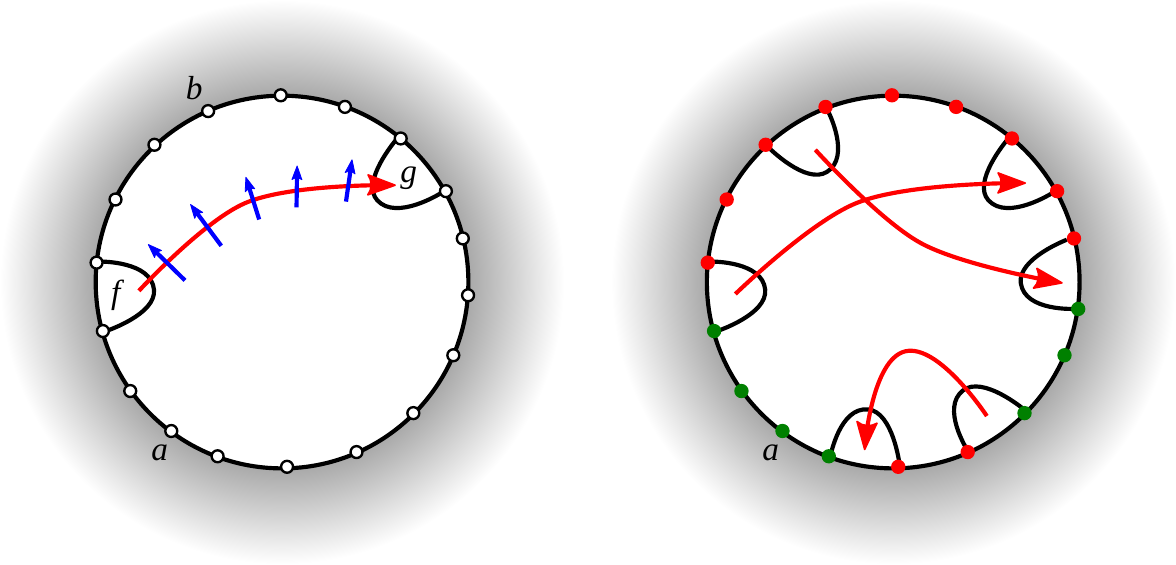}
  \vspace{-6ex}
  \caption{A hole (white) bounded by a cycle and vertices.  The red curves represent edges of $\DDG_0(R_P)$ and correspond to paths in the dual graph within the hole; the corresponding primal edges (blue) are saturated by the flow.  (left) There is no residual path within the hole from $a$ to $b$.  (right) The vertices reachable from $a$ by residual paths within the hole are highlighted green.}
\end{figure}}

\ifFull
\holefigure
\fi

For a vertex $a$ of $R_P\cap C$, we argue that we can determine the subset $S_a$ of $R_P\cap C$ of vertices that are reachable by a residual path in $G \setminus R_P$ in $O(|\partial R_P| \log n)$ time.  Represent the out-neighbors of a face $f$ of $\DDG_0(R_p)$ in clockwise order around $C$.  By binary search, we can determine the last out-neighbor $g$ of $f$ in this order that is on $C$ between $f$ and $a$ or determine that no such edge of $\DDG_0(R_P)$ exists.  This restricts the vertices reachable from $a$ by residual paths in $G \setminus R_P$ to those vertices on $C$ between $g$ and $a$.  By considering each of the faces in order around $C$ starting with the face immediately clockwise of $a$ on $C$, we can, in this way, determine $S_a$ in time $O(|\partial R_P| \log n)$.  Repeating for every vertex of $R_P\cap C$ and for every hole of $P$, we can build an external reachability graph representing reachability via residual paths in $G \setminus R_P$ among vertices of $\partial R_P$ in $O(|\partial R_P|^2 \log n)$ time.

The $s$-side of the cut is given by those vertices reachable by paths that are residual with respect to the flow.  We can find the subset of these vertices in $R_P$  by searching alternately inside $R_P$ via a straightforward search along residual edges (since the flow on edges of $R_P$ are represented explicitly) and search along edges of the external reachability graph in time $O(|R_P| + |\partial R_P|^2)$.  The total time to compute the flow using the algorithm of \L acki et~al., build the external reachability graph and determine the vertices of $R_P$ on the $s$-side of the cut is dominated by the \L acki et~al.\ algorithm, which is  $O((|R_P| + |\partial R_P|^2)\log^2n)$.

\subsection{Updating the region tree}\label{sec:rt}

We will describe how to update the region tree with this cut in terms of the corresponding separating cycle $C$. While performing this update we can, from the $s$-side of the cut within $R_P$, represent the minimum separating cycle by a subset of edges (non-residual edges at the boundary of the search) of $R_P$ and a subset of the edges of $\DDG_0(R_P)$; the total size of this representation is $|C| = O(|R_P|+|\partial R_P|)$ and  can be determined as part of the search. The algorithm we describe is a modified version of that described by Borradaile, Sankowski and Wulff-Nilsen~\cite{BSW14} that achieves a logarithmic speed-up.  We say that an edge $e\in E$ is a \emph{boundary edge} of a region $R$ if $e$ is contained in the bounding cycle of $R$;  Borradaile et~al.\ showed that by maintaining the region tree as a top tree, we can determine in $O(\log n)$ time whether a given edge is a boundary edge of a given region.  We use this fact in our analysis below.

We show how to update the region tree with $C$.  $C$ partitions the children of region $R$ in the region tree.  Let $\mathcal C_R$ be the child regions of $R$ in the region tree and let $\mathcal C_R'$ be the subset contained in the inside of $C$. To correctly update the region tree, we need to remove $\mathcal C_R'$ as children of $R$, add them as children of $C$, and add $C$ as a new child of $R$. If we can identify $\mathcal C_R'$, this update can be done in $O(|\mathcal C_R'|\log n)$ time since there are $O(|\mathcal C_R'|)$ topological changes and each change requires $O(\log n)$ update time in a top-tree representation of the region tree.. If instead we have identified $\mathcal C_R - \mathcal C_R'$ then updating the region tree can be done in $O((|\mathcal C_R - \mathcal C_R'|\log n))$ time. We shall run two algorithms in parallel, one identifying $\mathcal C_R'$, the other identifying $\mathcal C_R - \mathcal C_R'$, and terminate both algorithms when the smaller set has been identified in time $O(\min\{|\mathcal C_R'|, |\mathcal C_R - \mathcal C_R'|\}\log n)$.  For simplicity, we assume that 
 $\mathcal C_R'$ is the smaller set (the other case is symmetric) and show how to find $\mathcal C_R'$.

Let $m_R'$ be the total number of edges (with multiplicity) that bound cycles of regions in $\mathcal C_R'$, excluding edges of $C$.  We will show how to identify $\mathcal C_R'$ in time
\begin{equation}
O(m_R'\log^2n + \log^3 n + |C|).\label{eq:time}
\end{equation}
Over the course of all region tree updates, the second term sums up to $O(n\log^3 n)$ since only a linear number of cycles are added to the region tree.  Likewise, since $|C| = O(|R_P|+|\partial R_P|)$ and $\sum_{P\in \mathcal P} |R_P|+|\partial R_P| = O(\sum_{P\in \mathcal P} |P|) = O(n \log n)$, the third term adds up to $O(n \log n)$.  For the first term, consider distributing this cost among the cycles. Then note that a cycle pays for its edges no more than $O(\log n)$ times. To see this, note that each time a cycle pays, it gets a parent in the region tree with at most $c/2 + 1$ children where $c$ denotes the number of children of the previous parent. This is a constant-factor decrease for $c > 2$. We cannot have $c < 2$ and if $c = 2$ the problem is trivial since the two cycles of $\mathcal C_R'$ must be faces (of constant complexity) of $G^*$ since $C$ separates at least one pair of unseparated faces. Hence, we get a total running time for all region tree updates of $O(n\log^3 n)$.

\subsubsection{Identifying $\mathcal C_R'$}
We identify $\mathcal C_R'$ in two steps. In the first step, we identify those edges that have exactly one endpoint in $C$ and belong to cycles in $\mathcal C_R'$. In the second step, we explore the interior of $C$ starting at these edges to identify the boundaries of all the cycles of $\mathcal C_R'$. We shall only describe the first step as the second step is done exactly as by Borradaile et~al.\ in $O(m_R'\log n)$ time~\cite{BSW14}.
Recall that $C$ is represented by regular edges of a region subpiece together with \emph{super edges} in the external dense dual distance graphs, each representing a path in $G^*$. In the following, we assume that $C$ consists only of super edges as the regular edges are easy to handle using the top tree representation~\cite{BSW14}.

Each super edge $fg$ represents a shortest path in $G^*$ and was found using a fast Dijkstra implementation of Fakcharoenphol and Rao's recursive shortest path algorithm~\cite{FR06} to construct a dense dual distance graph.  In this construction, the path in $G^*$ corresponding to $fg$ has a recursive representation in line with the recursive decomposition of $G$.  That is, $fg$ decomposes into a path $Q_{fg}$ of edges which themselves are super edges in (internal and external) dense dual distance graphs.  The super edges of this path can be recursively decomposed until reaching edges of $G^*$.  The number of recursion levels is at most the depth $O(\log n)$ of the recursive decomposition of $G^*$.  We shall assume that any super edge $ab$ (i) points to the endpoints in $G^*$ of the subpath $Q_{ab}$ that $ab$ decomposes into at the next recursion level, (ii) points to the super edge of $Q_{ab}$ that contains the midpoint of $Q_{ab}$ (as a path in $G^*$), and (iii) is annotated with its length in terms of number of edges of $G^*$ (as well as its length in terms of weights of those edges).   It is easy to maintain this information during the construction of the dense dual distance graphs without an asymptotic time increase.

A minimum separating cycle in a planar graph is an isometric cycle: for any vertex $r$ on $C$, $C$ consists of two shortest paths $Q_1$ and $Q_2$ from $r$ to vertices $a$ and $b$, respectively, together with the single edge $ab$ of $G^*$~\cite{HM94}.  Borradaile, Sankowski and Wulff-Nilsen show how to find an $a$, $b$ and $r$ in time $O(\log^3n + |C|)$ along with a representation of $Q_i$ consisting of at most $|C|$ super edges at the top level of the recursion and $O(\log n)$ paths of super edges from dense dual distance graphs at deeper recursion levels for $i = 1,2$ (Section 4.2~\cite{BSW14}).  For $i = 1,2$, recall that we need to find all edges of $G^*$ with exactly one endpoint on $Q_i$ that are contained in boundary cycles of $\mathcal C_R'$.   By symmetry, we restrict our attention to $Q_1$.

The algorithm is as follows. First, identify the first and last edge (of $G^*$) on $Q_1$. Denote them by $e_1$ and $e_2$, respectively. Then check if $e_1$ and $e_2$ are incident to the same cycle $C'$ of $\mathcal C_R'$. As noted above, each such check can be done in $O(\log n)$ time using a top tree representation of the region tree.

If this check is positive then, since bounding cycles are isometric and shortest paths are unique, all edges of $Q_1$ are in $C'\cap C$ and we are done since none of these edges can be interior to a cycle of $\mathcal C_R'$ since these cycles have disjoint interiors.  (A special case is when a cycle of $\mathcal C_R'$ is a face of $G^*$ which need not be isometric but, since faces have constant size, this will not affect the analysis.)

If this check is negative, we find an edge $e_m$ of $G^*$ belonging to $Q_1$ such that a constant fraction of the (regular) edges of $Q_1$ are before $e_m$ and a constant fraction of them are after $e_m$. We then recurse on the two subpaths of $Q_1$ from $e_1$ to $e_m$ and from $e_m$ to $e_2$, respectively. If there are $k$ cycles of $\mathcal C_R'$ incident to $Q_1$ then the total number of regular edges $e_m$ considered is $O(k\log n)$. This is where our improvement differs from the original region-tree-updating technique of Borradaile et al.~\cite{BSW14}: we show how to identify each edge $e_m$ in $O(\log n)$ time instead of $O(\log^2n)$ time, which was the bottleneck of their method. Total time to identify all $O(k\log n)$ edges is then $O(k\log^2n) = O(m_R'\log^2n)$.

First we need a lemma. For a path $Q$, we denote by $Q[x,y]$ the subpath from $x$ to $y$ and we call a vertex $x$ a \emph{split point} of $Q$ if at most a constant $c < 1$ fraction of the vertices of $Q$ are before resp.~after $x$. \ifFull
\else
The proof is in the full paper.
\fi
\begin{lemma}\label{Lem:SplitPoint}
Let $Q = Q[a,c] \circ Q[c,e] \circ Q[e,f] \circ Q[f,d] \circ Q[d,b]$ be a directed path where possibly some of the subpaths are empty. Let $m$ (resp.~$m(c,d)$) denote the midpoint of $Q$ (resp.~$Q[c,d]$), splitting the path into two (almost) equal-size subpaths such that $|Q[c,d]| > \frac{1}{4}|Q|$, $m\in Q[c,d]$, and $m(c,d)\in Q[e,f]$. We have:
\ifFull
\begin{enumerate} [noitemsep,nolistsep]
\item If $m\in Q[c,e]$ then $e$ is a split point of $Q$.
\item If $m\in Q[f,d]$ then $f$ is a split point of $Q$.
\item If $m\in Q[e,f]$ and $|Q[e,f]|\leq \frac 1 4 |Q|$ then $e$ is a split point of $Q$.
\end{enumerate}
\else
If $m\in Q[c,e]$ then $e$ is a split point of $Q$; if $m\in Q[f,d]$ then $f$ is a split point of $Q$; if $m\in Q[e,f]$ and $|Q[e,f]|\leq \frac 1 4 |Q|$ then $e$ is a split point of $Q$.
\fi
\end{lemma}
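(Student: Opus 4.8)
My plan is to reduce all three parts to elementary counting using the five-block decomposition of $Q$. Set $\ell_1=|Q[a,c]|$, $\ell_2=|Q[c,e]|$, $\ell_3=|Q[e,f]|$, $\ell_4=|Q[f,d]|$, $\ell_5=|Q[d,b]|$ and $N=|Q|=\sum_i\ell_i$, so that $|Q[c,d]|=\ell_2+\ell_3+\ell_4$, and recall the hypothesis $|Q[c,d]|>N/4$. Up to an additive constant (from the ``(almost)'' in the statement and from whether the split vertex is counted on the left, on the right, or on both sides), the number of vertices of $Q$ before $e$ is $\ell_1+\ell_2$ and after $e$ is $\ell_3+\ell_4+\ell_5$, and similarly before/after $f$ is $\ell_1+\ell_2+\ell_3$ and $\ell_4+\ell_5$. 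I will first record two consequences of the midpoint hypotheses: (i) since $m$ is the midpoint of $Q$, every prefix of $Q$ that contains $m$ has length $\ge N/2$ and every suffix that contains $m$ has length $\ge N/2$; (ii) since $m(c,d)\in Q[e,f]$, the block $Q[c,e]$ lies in the first half of $Q[c,d]$ and $Q[f,d]$ lies in its second half, so $\ell_2\le|Q[c,d]|/2$ and $\ell_4\le|Q[c,d]|/2$.

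For part~(1) I would assume $m\in Q[c,e]$, so $m$ is at or before $e$; then the prefix of $Q$ ending at $e$ contains $m$, giving $\ell_1+\ell_2\ge N/2$, hence at most $N/2$ vertices after $e$. For the other direction, combine $\ell_1\le N-|Q[c,d]|$ with (ii) and $|Q[c,d]|>N/4$ to get
\[
\ell_1+\ell_2\le N-|Q[c,d]|+\tfrac12|Q[c,d]|=N-\tfrac12|Q[c,d]|<N-\tfrac18 N=\tfrac78 N,
\]
so each side of $e$ has between $\tfrac18 N$ and $\tfrac78 N$ vertices and $e$ is a split point (with constant $\tfrac78$). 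Part~(2) is the mirror image: with $m\in Q[f,d]$, the suffix of $Q$ starting at $f$ contains $m$, so $\ell_4+\ell_5\ge N/2$, while by the symmetric use of (ii), $\ell_4+\ell_5\le\tfrac12|Q[c,d]|+(N-|Q[c,d]|)<\tfrac78 N$; hence $f$ is a split point.

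For part~(3) I would assume $m\in Q[e,f]$ and $\ell_3\le N/4$. Since $m$ lies in the block $Q[e,f]$ it is at or after $e$, so the prefix ending at $e$ is contained in the prefix ending at $m$ and thus $\ell_1+\ell_2\le N/2$; and the portion of $Q[e,f]$ from $e$ to $m$ has length at most $\ell_3\le N/4$, so the prefix ending at $e$ has length at least $N/2-N/4=N/4$. Then both sides of $e$ have between $\tfrac14 N$ and $\tfrac34 N$ vertices, so $e$ is a split point, and the same constant $\tfrac78$ works for all three parts.

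The only delicate point, and one I expect to have to state carefully rather than a genuine obstacle, is the bookkeeping around ``(almost) equal-size'' and the off-by-$O(1)$ in converting between edge counts and vertex counts; since every inequality above has a constant-factor margin, one fixes a single constant $c=\tfrac78<1$ and notes the conclusion is trivially true once $N$ exceeds a fixed threshold. The substantive content is the one-line estimate $\ell_1+\ell_2\le N-\tfrac12|Q[c,d]|$ together with $|Q[c,d]|>\tfrac14 N$ (and its mirror image for $f$), which is what forces the candidate vertex to be balanced on both sides.
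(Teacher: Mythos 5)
Your proposal is correct and follows essentially the same argument as the paper: in each case the membership of $m$ pins at least half of $Q$ to one side of the candidate vertex, while the hypotheses $m(c,d)\in Q[e,f]$ and $|Q[c,d]|>\tfrac14|Q|$ (or $|Q[e,f]|\le\tfrac14|Q|$ in the third case) force at least a $\tfrac18|Q|$ (resp.\ $\tfrac14|Q|$) share, up to an additive constant, on the other side. Your bound $\ell_1+\ell_2\le N-\tfrac12|Q[c,d]|$ is just the complement of the paper's lower bound on $|Q[e,b]|$, so the two write-ups differ only in bookkeeping, not substance.
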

\ifFull
\begin{proof}
In the first case, since $m\in [a,e]$, at least half the vertices of $Q$ belong to $Q[a,e]$. Since $Q[m(c,d),d]\subseteq Q[e,b]$, $Q[e,b]$ has length at least $\frac 1 2 |Q[c,d]| - 1 > \frac 1 8 |Q| - 1$. Hence $e$ is a split point of $Q$. The second case is symmetric.
For the third case, since $m\in Q[e,b]$, at least half the vertices of $Q$ belong to this subpath. Since $|Q[e,m]|\leq |Q[e,f]|\leq \frac 1 4 |Q|$, $Q[a,e]$ has length at least $|Q[a,m]| - \frac 1 4 |Q|\geq |Q|/2 - 1 - \frac 1 4 |Q| = \frac 1 4 |Q| - 1$. Hence, $e$ is a split point of $Q$.
\end{proof}
\else
\fi

We use this lemma to find an edge $e_m$ with the property described above. Recall that we store the number of edges of $G^*$ represented by each super edge $ab$ as well as a pointer to the sub-super edge that contains the midpoint of $Q_{ab}$. As described above, path $Q_1$ is represented by $O(|C| + \log n)$ subpaths and since we know the length of each of them, we can in $O(|C| + \log n)$ identify the subpath $Q_1'$ that contains the midpoint of $Q_1$.  $Q_1'$ plays the role of $Q[e,f]$ and $Q_1$ plays the role of $Q[c,d]$ and $Q[a,b]$ in \autoref{Lem:SplitPoint} (so that $a = c$ and $b = d$). The assumptions in the lemma hold and we can check in constant time whether any of the three cases apply. If so, we are done as we have found a split point and we can pick $e_m$ as an edge incident to this point. Otherwise, we have $m\in Q[e,f]$ and $|Q_1'| = |Q[e,f]| > \frac 1 4 |Q|$. The path $Q[e,f]$ belongs to a shortest path of a dense dual distance graph, so in constant time, we can identify the super edge $uv$ of this path that contains its midpoint. Now apply the lemma again but with $Q[e,f]$ defined as $uv$, $Q[c,d]$ defined as $Q_1'$, and $Q[a,b]$ defined as $Q_1$. Then the assumptions in the lemma hold again. Applying it, we either find a split point or we decompose $uv$ into a shortest path of super edges, and recurse on the super edge containing the midpoint of this path. Each step runs in $O(1)$ time and since there are $O(\log n)$ recursion levels, we obtain a split point and hence $e_m$ in $O(\log n)$ time.

We conclude from the above that the total time to add bounding cycles to the region tree is $O(n\log^3 n)$.

\newpage
\paragraph{Acknowledgements} This material is based upon work supported by the National Science Foundation under Grant Nos.\ CCF-0963921, CCF-1228639, and CCF-1252833 and by the Office of Naval Research under Grant No. N00014-08-1-1015.

\bibliographystyle{abbrv}
\bibliography{gh}

\ifFull
\else
\homsigfigure
\doubleTorusfigure
\holefigure
\fi

\end{document}